\newtheorem{lem}{Lemma}
\newtheorem{theorem}{Theorem}
\newtheorem{defn}{Definition}
\newtheorem{rem}{Remark}
\newtheorem{prop}{Proposition}
\newtheorem{cor}{Corollary}
\def\mb{\mathbf}
\def\mc{\mathcal}
\def\wh{\widehat}
\def\wt{\widetilde}
\title{\huge Graphic-theoretic distributed inference in social networks}
\author{Mohammadreza Doostmohammadian and Usman A. Khan
\thanks{
Department of Electrical and Computer Engineering, Tufts University, {\texttt{\{mrd,khan\}@ece.tufts.edu}}. This work has been partially supported by an NSF CAREER award \# CCF-1350264.}}
\begin{document}
\maketitle

\begin{abstract}
We consider distributed inference in social networks where a phenomenon of interest evolves over a given social interaction graph, referred to as the \emph{social digraph}. For inference, we assume that a network of agents monitors certain nodes in the social digraph and no agent may be able to perform inference within its neighborhood; the agents must rely on inter-agent communication. The key contributions of this paper include:
\begin{inparaenum}[(i)]
\item a novel construction of the distributed estimator and distributed observability from the first principles;
\item a graph-theoretic agent classification that establishes the importance and role of each agent towards inference;
\item characterizing the necessary conditions, based on the classification in (ii), on the agent network to achieve distributed observability.
\end{inparaenum}
Our results are based on structured systems theory and are applicable to any parameter choice of the underlying system matrix as long as the social digraph remains fixed. In other words, any social phenomena that evolves (linearly) over a structure-invariant social digraph may be considered--we refer to such systems as Liner Structure-Invariant (LSI). The aforementioned contributions, (i)--(iii), thus, only require the knowledge of the social digraph (topology) and are independent of the social phenomena. We show the applicability of the results to several real-wold social networks, i.e. social influence among monks, networks of political blogs and books, and a co-authorship graph.

\textit{Keywords:} Distributed estimation and observability, Dulmage-Mendelsohn decomposition, Bipartite graphs, Graph contractions
\end{abstract}

\section{Introduction}
Social networks appear in a wide variety of contexts ranging from, e.g. humans, animals, communities, to economics, markets, sales, blogs, and citations; relevant literature includes~\cite{liu-pnas,Liu-nature,Birla:wo,acc13_ghaderi,Friedkin:1999vd,acemoglu2011opinion,jadbabaie2012belief,friedkin2006structural,newman2006barabasi,Xing:2010ft,Friedkin:2003tx,opinion2002hegselmann} and references therein. The underlying social phenomena of interest (evolving on such networks) also vary widely and include voting models, flocking, herd behavior, rumor propagation, stock prices, and (community) trends to name a few. An additional layer of complexity is added by noting that the associated dynamics can be either non-linear or linear. The problem of inference in social networks, thus, is highly complex as it has to tackle the wide diversity and range of the underlying social dynamics. What is uniform, however, across all of the social networks and the corresponding dynamics is the presence of a \emph{social digraph}, i.e. the social interactions over which a phenomenon of interest evolves. In this context, we formulate the inference problem, and subsequently the controllability and observability of social dynamics, on the social digraphs, independent of the particular social phenomenon. Our prime focus is on linear dynamics, however, as discussed in~\cite{liu-pnas,Liu-nature}, observability (and controllability) of nonlinear dynamics can also be characterized explicitly on the underlying system digraphs, e.g. the observability of bio- and chemical-networks based on the state-interactions. 
\begin{figure}
\centering
\includegraphics [height=1.45in]{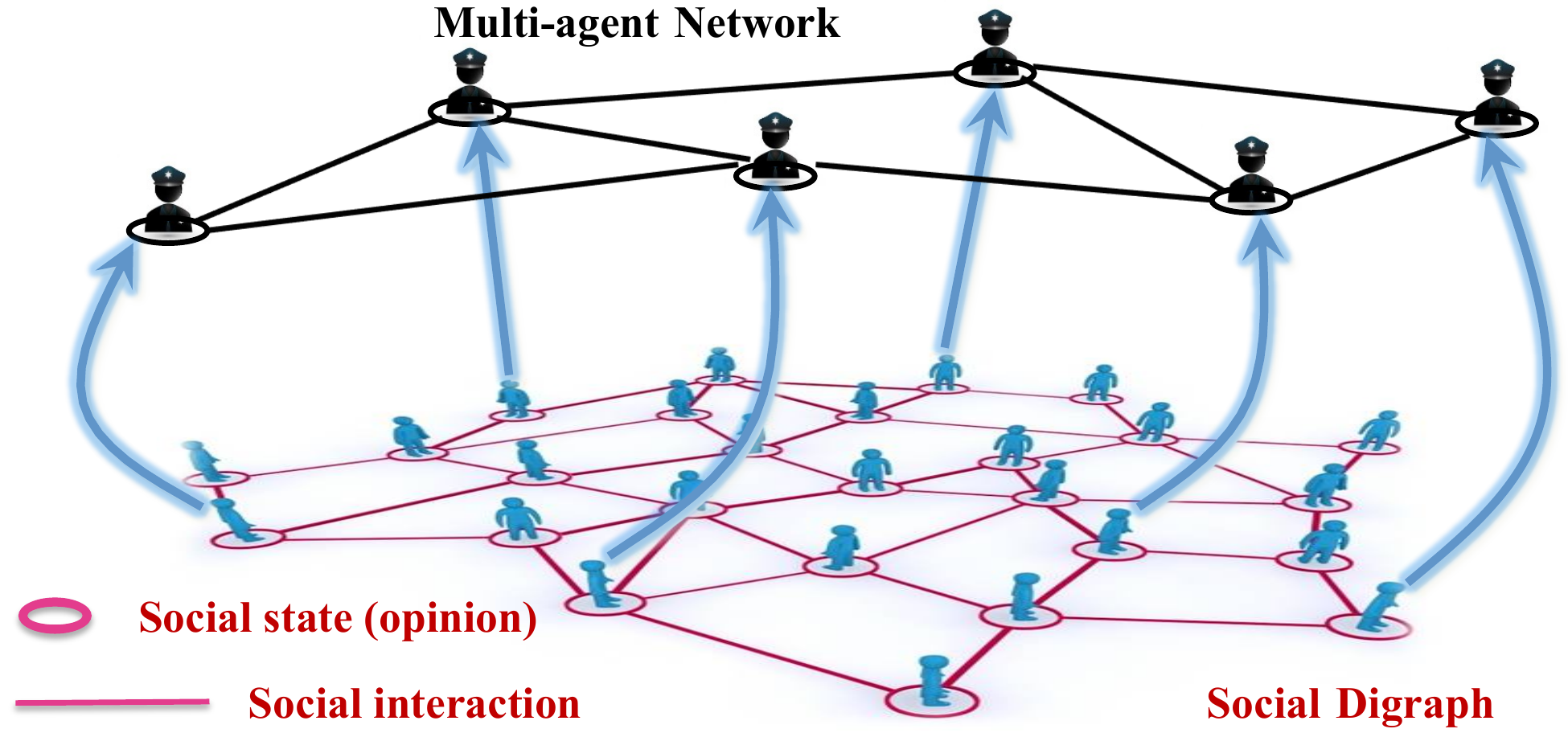}
\caption{Social network of \textit{dynamic} influence and fixed interactions with agents monitoring the nodes (states) of the interaction graph}
\label{figsocial}
\end{figure}

This paper studies distributed inference of social dynamics that naturally evolve over the social digraphs. Each state within the social dynamics is a node in the corresponding digraph and may represent an opinion~\cite{acemoglu2011opinion}, or a belief~\cite{jadbabaie2012belief}, of an individual, actor, or a member of the social network. The states evolve (linearly or non-linearly) over the social digraph, i.e. according to the interactions among the members of the social network, see e.g.~\cite{friedkin2006structural,newman2006barabasi} for details on relevant social models. In this sense, \emph{a social state} and \emph{a node in the social digraph} will be used interchangeably. For the purpose of inference, we deploy a network of agents that monitor a few nodes (social states) of the social digraph, see Fig.~\ref{figsocial}. Since the agent measurements may not be sufficient to build the entire state estimate, the agents communicate among themselves. A natural question is to \emph{design communication among the agents} that results in the \emph{distributed observability} of the social digraph; in a way that the results are independent of the particular social dynamics and only depend on the social interactions (digraph). 

We study communication design towards distributed observability over a Linear Structure-Invariant (LSI) characterization of social networks. An LSI system is such that the structure (zero/non-zero pattern) of its system matrices remains fixed but the non-zero elements may take any arbitrary (non-zero) values possibly changing over time. In this sense, \emph{our results are applicable to any social phenomena described over a given social digraph}. This is because any member of the social network can change the weight associated to its neighboring actors (in the social digraph) but some weight will always be assigned, e.g. see~\cite{Xing:2010ft,Friedkin:2003tx}. Related literature on linear social models includes: Reference~\cite{acemoglu2011opinion} on opinion dynamics; Reference~\cite{Friedkin:1999vd} on actor influence networks along with~\cite{acc13_ghaderi} on actor stubbornness; Reference~\cite{acemoglu2011opinion,opinion2002hegselmann} on Markov opinions; and, References~\cite{acemoglu2011opinion,Birla:wo,acc13_ghaderi,Friedkin:1999vd} on consensus-based models. Towards distributed estimation, related work includes the earlier work on Kalman-consensus filters~\cite{olfati:05,khan-DKF,Giannakis-Estimation} to more recent work on moving horizon estimators~\cite{farina-mhe} and distributed Kalman filters~\cite{usman_cdc:10,jstsp}. Another important object of study has been the characterization of distributed observability~\cite{nuno-suff.ness,battistelli_cdc,ugrinovskii2012conditions}. 

The main contributions of this paper includes the following towards social phenomena modeled by discrete-time LSI systems (perhaps, after appropriate linearization~\cite{liu-pnas,Liu-nature}): \emph{First}, we mathematically formulate distributed observability involving the Kronecker product of the agent (communication) topology and the social digraph (the LSI system matrix); and, \emph{Second}, we derive the \textit{necessary} conditions on the agent topology to ensure distributed observability. During this process, we specifically address the following:
\begin{inparaenum}[(i)]
\item Which states are critical for the inference of social dynamics? and,
\item Given the critical states observed by agents, what are the agent connectivity requirements to ensure distributed observability?
\end{inparaenum}
The first question aims at studying the contribution of each observation towards the centralized observability. As we will show, the observations critical for the centralized observability are also critical for distributed observability. However, the communication among the agents (possessing these observations) is different. In particular, we show that each critical observation is not required given certain agent connectivity.

We treat distributed observability as \emph{generic},~i.e. only tied to the structure of the LSI system matrix, and, in turn, to the social digraph. Our work significantly differs from the related work on centralized observability~\cite{commault-recovery,boukhobza-recovery,liu-pnas,acc13_kar}, and its dual on centralized controllability~\cite{Liu-nature,woude:03}, due to the following reasons: 
\begin{inparaenum}[(a)]
\item We show that the critical measurements, e.g. in~\cite{commault-recovery}, can be further partitioned into two classes: Type-$\alpha$ and Type-$\beta$, driven by the structural rank of the LSI system matrix (social digraph); 
\item This partitioning further enables us to show: 
\begin{inparaenum}[(i)]
\item Only Type-$\alpha$ measurements are required at each agent; and,
\item Type-$\beta$ measurements are not necessarily required as long as each agent has a directed path to all Type-$\beta$ measurements;
\end{inparaenum}
\item Our analysis does not make any assumption on the structural-rank of the social system in contrast to~\cite{icassp13,roy_defective,battistelli_cdc}, where full structural-rank is assumed;
\item The results are also distinguishable from~\cite{hierarchy-giannakis,hierarchy-egerstedt}, where hierarchical agent topology is assumed. 
\end{inparaenum}
The main objective here is to \textit{design} the agent topology in contrast to predetermined networks in~\cite{usman_cdc:11,farina-mhe,battistelli_cdc,sayedtu12}.

The rest of the paper is organized as follows: Section~\ref{pr} gives preliminaries on graph theory, social modeling, and formulates the distributed estimation problem. Section~\ref{dist-prob} provides a mathematical derivation of distributed observability. Section~\ref{obsrec} enlists some advanced graph-theoretic concepts subsequently used in Sections~\ref{nessmeas} and~\ref{partial} to derive necessary observations and agent connectivity. We illustrate our approach on examples of real-world social networks in Section~\ref{example}. Finally, Section~\ref{con} concludes the paper.

\section{Preliminaries and Problem formulation} \label{pr}
In this section, we discuss the preliminaries to describe this paper and formulate the distributed inference problem.

\subsection{Social phenomenon: Modeling and interactions} \label{mdl}
Social networks and complex networks, in general, have been modeled using both linear and nonlinear dynamics, see~\cite{blondel2009krause ,friedkin2006structural, newman2006barabasi}, and references within. Examples of linear models are in consensus/agreement problems~\cite{Birla:wo, acc13_ghaderi,Friedkin:1999vd} and Markov-based opinion formation~\cite{acemoglu2011opinion,opinion2002hegselmann}. Two well-known linear models are social influence networks by Freidkin and Johnson~\cite{Friedkin:1999vd} and French model~\cite{French}. The French model formulates the formation of opinions (states) under the interpersonal influence of peers. Similarly, Freidkin and Johnson model the process of social influence on opinion evolution. Another socio-economic example is~\cite{vanDalen:2010uq}, where product prices as states linearly evolve on a daily basis according to a competitiveness matrix (auction game). Of significant relevance to this paper is the characterization in~\cite{Xing:2010ft} and~\cite{Friedkin:2003tx} where the structure of the linear model is assumed to be fixed but with time-varying interaction weights. In particular, Reference~\cite{Xing:2010ft} describes examples of a linear state-space on the social networks resulting from email communication, and social interaction of Monks (members of a particular religious order). On the other hand, Reference~\cite{Friedkin:2003tx} discusses a linear state-space for influence networks, where attitudes, sentiments, or expectations (states) evolve over time-varying influences of other actors.

For nonlinear social dynamics, simplified modeling methods have been considered, e.g.~\cite{liu-pnas,Liu-nature}. Particularly, observability of nonlinear dynamics is characterized by the structural observability of the corresponding linearized system~\cite{liu-pnas}. Hence, it is natural to model the social phenomena as LSI systems, where any (non-zero) element of the system matrix may change (modeling distinct or time-varying phenomena) as long as the structure (social digraph) is not violated, e.g. see~\cite{Xing:2010ft} and~\cite{Friedkin:2003tx}. Mathematically, we model the social dynamics as 
\begin{eqnarray}\label{sys1} \label{A}
\mb{x}_{k+1} = A_k\mb{x}_k + \mb{v}_k,\qquad k\geq0,
\end{eqnarray}
where~$A_k$ is the system matrix,~$\mb{x}_k=[x_{1,k}~\ldots~x_{n,k}]^T\in\mathbb{R}^n$ is the state vector, and~$\mb{v}_k$ is Gaussian noise. \emph{The system matrix,~$A_k$, is such that its elements can change but its structure, denoted by~$A=\{a_{ij}\}$ and following from the social digraph, is invariant over time.} As an example, consider the email communication network in~\cite{Xing:2010ft} where the states are time-series of the email communication among ENRON employees. The number of emails exchanged are modeled as a linear state-space and their evolution before and after collapse of the company is studied. Another related LSI description is the monk network where a group of monks are ranked based on their inter-relations (e.g. liking/disliking, praise/blame, etc.) form the well-known Sampson's network~\cite{sampson_network}.

We assume that Eq.~\eqref{sys1} is monitored by~$N$ agents:
\begin{eqnarray} \label{H_i}
\mb{y}_k^i = H^i\mb{x}_k + \mb{r}_k^i.
\end{eqnarray}
where~$H^i_k=\{h_{ij}\}\in\mathbb{R}^{p_i\times n}$ is the local observation matrix at agent~$i$ and time~$k$;~$\mb{y}^i_k\in\mathbb{R}^{p_i}$  is the local observation vector, and~$\mb{r}^i_k$ is local observation Gaussian noise. With this notation, the global observation model is
\begin{eqnarray}\nonumber
\left[
\begin{array}{c}
\mb{y}^1_{k}\\
\vdots\\
\mb{y}^N_{k}
\end{array}
\right] &=&
\left[
\begin{array}{c}
H_{1}\\
\vdots\\
H_{N}
\end{array}
\right]\left[
\begin{array}{c}
\mb{x}^1_{k}\\
\vdots\\
\mb{x}^N_{k}
\end{array}
\right]+
\left[
\begin{array}{c}
\mb{r}^1_{k}\\
\vdots\\
\mb{r}^N_{k}
\end{array}
\right],\\
\triangleq \mb{y}_k &=& H\mb{x}_k + \mb{r}_k,
\end{eqnarray}
where~$\mb{y}_k\in\mathbb{R}^{p}$ is the global observation vector,~$H=\{h_{ij}\}\in\mathbb{R}^{p\times n}$ is the global observation matrix, and~$\mb{r}_k$ is noise. Clearly, we have~$p=p_1+\ldots+p_N$.

The agents,~$i=1,\ldots,N$, monitoring the social network, exchange information over a communication graph,~$\mc{G}_W=(\mc{V}_W,\mc{E}_W)$. The set,~$\mc{V}_W$, consists of all of the agents, whereas the set,~$\mc{E}_W$, is the edge set of ordered pairs,~$(i,j)$, describing that agent~$j$ can send information to agent~$i$. The neighborhood at an agent~$i\in\mc{V}_W$ is denoted by~$\mc{N}_i$, defined as
\begin{eqnarray}
\mc{N}_i=\{j~|~(i,j)\in\mc{E}_W\},
\end{eqnarray}
where we use~$\mc{D}_i$ to denote the extended neighborhood, i.e.~$\{i\}\cup\mc{N}_i$, and~$\mbox{Adj}(\mc{G}_W)$ denotes the graph adjacency.

\emph{We are explicitly interested in designing the structure of $H$, i.e. where to place the agents, and the structure of $\mbox{Adj}(\mc{G}_W)$, i.e. how the agent should communicate, given only the structure of the LSI system matrix, $A$, i.e. the social digraph.}

\subsection{Structured Systems Theory}\label{sst_pre}
{\bf System (social) digraph}: In structured systems theory, the system in Eqs.~\eqref{sys1}-\eqref{H_i} is typically modeled as a \emph{system digraph}, where the nodes are states and the edges are the social interactions given by the system matrix,~$A=\{a_{ij}\}$. Let~$\mc{X}\triangleq\{x_1,\ldots,x_n\}$ denote the set of states, and~$\mc{Y}\triangleq\{\mb{y}^1,\ldots,\mb{y}^N\}$ denote the set of observations. Then the system digraph is given by~$\mc{G}_A=(\mc{V}_A=X,\mc{E}_A)$. The edge set,~$\mc{E}_A$, is defined as~$\mc{E}_A=\{(x_i,x_j)~|~a_{ij}\neq0\}$ to be interpreted as~$x_i\leftarrow x_j$. In our social model,~$\mc{G}_A$ precisely captures the interactions on the social digraph over which the social phenomena evolve.

{\bf Composite digraph}: A digraph,~$\mc{G}_{\scriptsize \mbox{sys}} = (\mc{V}_{\scriptsize \mbox{sys}},\mc{E}_{\scriptsize \mbox{sys}})$, described on both states and observations:~$\mc{V}_{\scriptsize \mbox{sys}}=\{\mc{X} \cup \mc{Y}\}$, and~$\mc{E}_{\scriptsize \mbox{sys}}=\{(x_i,x_j)~|~a_{ij}\neq0\} \cup \{(y_i,x_j)~|~h_{ij}\neq0\}$. This composite graph,~$\mc{G}_{\scriptsize \mbox{sys}}$, is associated to the pair~$(A,H)$ and adds observations (agents) to the system (social) digraph,~$\mc{G}_A$. The following are defined over the composite graph,~$\mc{G}_{\scriptsize \mbox{sys}}$.

A \emph{path},~$i\overset{\scriptsize\mbox{path}}{\longrightarrow} j$, from~$i\in\mc{V}_{\scriptsize \mbox{sys}}$ to~$j\in\mc{V}_{\scriptsize \mbox{sys}}$, is such that there exists a sequence of nodes,~$\{i,{i_1},\ldots,i_{L-1},j\}$ in~$\mc{V}_{\scriptsize \mbox{sys}}$, with~$(j,i_{L-1}),\ldots,(i_1,i)\in\mc{E}_{\scriptsize \mbox{sys}}$. A path is called \emph{$\mc{Y}$-connected}, if it ends at a node in~$\mc{Y}$. A \emph{cycle} is a path where the begin and end nodes are the same. With this notation, the following theorem~\cite{liu-pnas} states the conditions required for \textit{structural centralized observability} of the social digraph:
\begin{theorem}\label{th1}
A system is observable if and only if in,~$\mc{G}_{\scriptsize \mbox{sys}}$: \textbf{Accessibility}--Each state is the begin-node of a~$\mathcal{Y}$-connected path; and, \textbf{$S$-rank condition}--There exist a union of \textit{disjoint}~cycles and~$\mathcal{Y}$-connected paths covering all the states.
\end{theorem}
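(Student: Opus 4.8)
The plan is to reduce the claim to the classical structural controllability test and then translate the two algebraic conditions of that test into the digraph language used above. Since $(A,H)$ is observable precisely when $(A^T,H^T)$ is controllable, and structural observability (observability for a generic choice of the non-zero entries consistent with the fixed structure $A$, $H$) dualizes to structural controllability in the same way, it suffices to characterize when a structured pair is controllable for almost every numerical realization; see the treatment in~\cite{liu-pnas}. The recurring meta-argument is genericity: any fixed $n\times n$ minor of the observability matrix $\mc{O}=[H^T~(HA)^T~\cdots~(HA^{n-1})^T]^T$, regarded as a polynomial in the free entries of $A$ and $H$, either vanishes identically or is non-zero off a measure-zero algebraic set; hence ``structurally observable'' is equivalent to ``$\mc{O}$ has generic rank $n$,'' which is equivalent to ``some $n\times n$ minor of $\mc{O}$ is a non-zero polynomial.'' Everything reduces to exhibiting, or to ruling out, such a minor.

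For \emph{necessity} I would argue each condition by contraposition. If Accessibility fails, let $\mc{S}\subseteq\mc{X}$ be the (non-empty) set of states admitting no $\mc{Y}$-connected path; since no state of $\mc{S}$ reaches any observation and no state of $\mc{S}$ reaches any state outside $\mc{S}$, ordering the states with $\mc{S}$ first puts $A$ in block-triangular form and zeros the $\mc{S}$-columns of $H$, so every block $HA^k$ has a zero column-block indexed by $\mc{S}$ and $\mc{O}$ has rank at most $n-|\mc{S}|$ for \emph{every} realization. If the $S$-rank condition fails, then by the standard correspondence between a union of vertex-disjoint cycles and $\mc{Y}$-connected paths covering all states and a matching that saturates the state side of the bipartite graph associated with the stacked matrix $[A^T~H^T]^T$ (cycles match states through $A$-edges, $\mc{Y}$-connected paths peel states off onto observation rows), the generic rank of $[A^T~H^T]^T$ is strictly below $n$. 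But if $[A^T~H^T]^Tv=0$ with $v\neq0$ then $Hv=0$ and $HA^kv=HA^{k-1}(Av)=0$ for all $k\geq1$, so $\mc{O}v=0$; thus $\mc{O}$ has rank $<n$ for every realization. Either way the system is not structurally observable.

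For \emph{sufficiency}, assume both conditions hold and fix a family $\mc{C}$ of vertex-disjoint cycles and $\mc{Y}$-connected paths covering all of $\mc{X}$, as granted by the $S$-rank condition. Each $\mc{Y}$-connected path $x_{i_0}\to\cdots\to x_{i_\ell}\to y$ in $\mc{C}$ singles out, via the appropriate row of $HA^{\ell}$, a square-free monomial in the parameters threading exactly the states on that path; each cycle of $\mc{C}$, using Accessibility to route one of its vertices to $\mc{Y}$, can be opened into such a path and assigned to further rows of $\mc{O}$. Collecting these $n$ rows gives an $n\times n$ submatrix of $\mc{O}$ whose determinant, expanded over permutations, contains the distinguished monomial attached to $\mc{C}$; since the edges of $\mc{C}$ are vertex-disjoint this monomial cannot be produced by any other permutation once the parameters are specialized generically (e.g.\ as algebraically independent reals, so that terms coming from different path-length profiles do not conspire to cancel). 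Hence the determinant is a non-zero polynomial, $\mc{O}$ has generic rank $n$, and the pair $(A,H)$ is structurally observable.

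The step I expect to be the main obstacle is this final non-cancellation argument. The delicate part is the bookkeeping by which the cycles of $\mc{C}$---which live entirely inside $A$ and can a priori contribute to several powers $A^k$ simultaneously---are routed to $\mc{Y}$ and assigned to rows of $\mc{O}$ disjoint from the rows already used by the covering paths, and then the certification that the resulting monomial survives the full determinant expansion. This is exactly where the classical proofs invoke a concrete ``cactus'' construction on the system digraph together with a careful parameter specialization; carrying that construction over to the composite digraph $\mc{G}_{\scriptsize \mbox{sys}}$ with the $\mc{Y}$-connected-path terminology used here is the technical heart of the argument.
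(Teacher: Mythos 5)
First, be aware that the paper does not prove Theorem~\ref{th1} at all: it is imported from the structured-systems literature (the statement is attributed to \cite{liu-pnas}, with \cite{shields} and \cite{woude:03} cited for the algebraic and genericity interpretations), so there is no in-paper argument to compare against; your sketch is an attempt to reconstruct the standard proof of that cited result. Your necessity half is essentially complete and correct: when Accessibility fails, the set $\mc{S}$ of states with no $\mc{Y}$-connected path yields a block-triangular form of $A$ and zero $\mc{S}$-columns of $H$, hence zero $\mc{S}$-columns of every $HA^k$; when the $S$-rank condition fails, the correspondence between a disjoint cycle/$\mc{Y}$-path cover of all states and a matching saturating the state side of the bipartite graph of $[A^\top~H^\top]^\top$ gives a nonzero $v$ with $Av=0$, $Hv=0$, hence $\mc{O}v=0$ for every realization. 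The genericity meta-argument (each $n\times n$ minor of $\mc{O}$ is a polynomial that either vanishes identically or vanishes only on a measure-zero variety) is exactly the framework the cited references use.

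The sufficiency half, however, has a genuine gap at precisely the point you flag, and it is not mere bookkeeping. The construction you propose---``opening'' each cycle of the cover $\mc{C}$ into a $\mc{Y}$-connected path by routing one of its vertices to $\mc{Y}$ via Accessibility---does not preserve vertex-disjointness: the routing path may pass through states already covered by other paths or cycles of $\mc{C}$, so the distinguished monomial is no longer square-free and the claim that no other permutation reproduces it collapses. Moreover, each entry of $HA^{\ell}$ is itself a sum over all length-$\ell$ walks, so even with a genuinely disjoint family one must rule out cancellation between terms coming from different permutation/walk-length profiles; algebraic independence of the parameters only helps once you have shown the coefficient of the distinguished monomial in the determinant is nonzero, which is the very thing at issue. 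The classical proofs exist to handle exactly this step---Lin's cactus decomposition (dualized to observability), the Shields--Pearson canonical-form argument, or the characterization via the generic rank of $[\lambda I-A^\top~H^\top]^\top$---and a complete write-up must either carry out one of those constructions on $\mc{G}_{\scriptsize \mbox{sys}}$ or cite it, as the paper does. As written, your argument establishes necessity and correctly identifies, but does not close, the hard half of the theorem.
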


It is noteworthy that Theorem~\ref{th1} does not require the exact system parameters but only the zero/non-zero structure. In this sense, the observability characterization of social systems is \emph{generic}, i.e. Theorem~\ref{th1} is applicable to any choice of social dynamics with a given social digraph, structure of~$A$ and~$H$--the set of values for which Theorem~\ref{th1} does not hold lie on an algebraic variety whose Lebesgue measure is zero~\cite{woude:03}.

The accessibility and~$S$-rank conditions of Theorem~\ref{th1} also have algebraic interpretations~\cite{shields}. \emph{Accessibility} is tied with the irreducibility of the structure of~$\left[A^\top~H^\top\right]^\top$. Intuitively, each state (node in social digraph) in~$\mc{G}_A$ must have its own `downstream' observation. If a state has no downstream state or observation, the information of that state (and its `upstream' nodes) are not accessible. The~$S$\emph{-rank condition} is equivalent to~$\mbox{$S$-rank}\left[A^\top~H^\top\right]^\top=n$. In other words, if two or more states point to the same downstream node, the distinct information of all the upstream nodes cannot be inferred from the same downstream node. 

Graph-theoretic notions of contractions~\cite{commault-recovery,boukhobza-recovery}, and Strongly Connected Components (SCC)~\cite{asilomar11} are useful for such structured analysis. Some related concepts will be introduced in Sections~\ref{agr1} and~\ref{agr2}, whereas, Section~\ref{DMalg} provides a list of relevant computationally-efficient tools.

\subsection{Problem formulation: Distributed inference}
The problem of estimating the state,~$\mb{x}_k$ from the distributed agent observations,~$\mb{y}_k^i$, can be fundamentally considered in two different contexts:
\begin{inparaenum}[(i)]
\item Central--the agent observations are collected at a center where the estimate is computed~\cite{bay}; and,
\item Distributed--the agents interact with each other over the communication graph,~$\mc{G}_W$, and each agent estimates the entire state,~$\mb{x}_k$, given its observations \emph{and} the interactions up to time~$k$. This estimate is denoted by~$\wh{\mb{x}}_{k|k}^i$ at agent~$i$.
\end{inparaenum}

In this paper, we explicitly consider the distributed estimation assuming only that the pair,~$(A,H)$, is observable~\cite{bay}, i.e. no agent may estimate the entire state within its neighborhood. In other words, neither the pair,~$(A,H_i)$, nor the pair,~$(A,\{H_j\}_{j\in\mc{D}_i})$, may be observable. It is noteworthy that unlike many estimation schemes~\cite{olfati:05,khan-DKF, Giannakis-Estimation}, we assume that each agent exchanges information over~$\mc{G}_W$ only once per~$k$. Under these assumptions, we consider the following problems in this paper. Given the social digraph, $\mc{G}_A$, Eqs.~\eqref{sys1}-\eqref{H_i}, and the agent communication,~$\mc{G}_W$:
\begin{enumerate}[(a)]
\item What is distributed observability of a social digraph? We derive this in Section~\ref{dist-prob} leading to a distributed estimator.
\item What are the necessary conditions on the agent communication,~$\mc{G}_W$, such that the underlying social digraph is distributedly observable?
\end{enumerate}

\section{Distributed observability} \label{dist-prob}
We now describe distributed estimation in more detail where each agent~$i$ is to estimate the state,~$\mb{x}_k$, with its observations,~$\mb{y}_k^i$, and with its neighboring observations,~$\{\mb{y}_k^j\}_{j\in\mc{N}_i}$. Each agent,~$i$, thus, estimates the state-vector, described by Eq.~\eqref{sys1}, from the following observations:
\begin{eqnarray} \label{sys3}
\mb{y}_k^j = H_j\mb{x}_k + \mb{r}_k^j,\qquad j\in\{i\}\cup\mc{N}_i.
\end{eqnarray}
Let us assume that the neighbor set has a total of~$N_i$ neighbors, i.e.~$|\mc{N}_i|=N_i$, and is indexed by~$i_1,i_2,\ldots,i_{N_i}$. Then, agent~$i$ is to estimate~$\mb{x}_k$ from the neighboring observations,~$\mb{y}_k^i,\mb{y}_k^{i_1},\ldots,\mb{y}_k^{i_{N_i}}.$ Or, equivalently, with the following:
\begin{eqnarray}\label{Usys1}
\wt{\mb{y}}_k^i &\triangleq& \left[
\begin{array}{c}
\mb{y}_k^i\\
\vdots\\
\mb{y}_k^{i_{N_i}}
\end{array}
\right]=
\left[
\begin{array}{c}
H_i\\
\vdots\\
H_{N_i}
\end{array}
\right]\mb{x}_k +
\left[
\begin{array}{c}
\mb{r}_k^i\\
\vdots\\
\mb{r}_k^{N_i}
\end{array}
\right].
\end{eqnarray}
The above observation model is equivalent to~\cite{usman_acc:11}:
\begin{eqnarray}\label{Usys2}
\mb{z}_k^i=\left[
\begin{array}{cccc}
H_i^\top&\ldots&H_{N_i}^\top
\end{array}
\right]\wt{\mb{y}}_k^i\triangleq \wt{H}_i\mb{x}_k + \wt{\mb{r}}_k^j,\\
\mbox{with}~~~\wt{H}_i\triangleq \sum_{j\in\{i\}\cup\mc{N}_i}H_j^\top H_j,\qquad\wt{\mb{r}}_k^i\triangleq\sum_{j\in\{i\}\cup\mc{N}_i}H_j^\top\mb{r}_k^j.
\end{eqnarray}
In fact, Eq.~\eqref{Usys2} is just a compact way of writing Eq.~\eqref{Usys1}. The distributed estimation problem over the communication graph,~$\mc{G}_W$, is now to estimate~$\mb{x}_k$ at each agent,~$i$, with the observations,~$\mb{z}_k^i$. From the standard estimation theory arguments~\cite{bay}, we know that such an estimation is possible at any agent~$i$, if and only if, the pair,~$(A,\wt{H}_i)$, is observable. For observability at all of the agents, we must consider all such pairs,~$(A,\wt{H}_1), (A,\wt{H}_2), \ldots, (A,\wt{H}_N)$, i.e. the observability of
\begin{eqnarray}
\left(
\left[
\begin{array}{cccc}
A\\
&\ddots\\
&&A
\end{array}
\right],\underbrace{\left[
\begin{array}{cccc}
\wt{H}_1\\
&\ddots\\
&&\wt{H}_N
\end{array}
\right]}_{\triangleq D_H}\right),
\end{eqnarray}
compactly written as~$(I\otimes A, D_H)$. It is straightforward to show that a centrally observable system does not necessarily imply that the distributed system is also observable, i.e.
\begin{eqnarray}
(A,H)\mbox{-observability}\nRightarrow(I\otimes A,D_H)\mbox{-observability}.
\end{eqnarray}

We note that the above straightforward description of distributed observability is actually misleading. The primary reason is that although observation exchanges are considered, the agents may also exchange their local predictors. This latter exchange does not appear in the above characterization of distributed observability. In the following, we provide a novel construction to derive distributed observability that accommodates for both observation and predictor exchanges, and show that distributed observability \emph{does not} require each agent to be observable in its neighborhood.

\subsection{Derivation}\label{do_d}
Consider again the distributed estimation problem where we wish to estimate the dynamics in Eq.~\eqref{sys1} via the observations in Eq.~\eqref{sys3}. Recall that~$\widehat{\mb{x}}^i_{k|k}$ denotes the estimate of the state,~$\mb{x}_k$, using all of the observations available at agent~$i$, and its neighboring agents up to time~$k$. Concatenating the estimates at all agents, the \emph{global} state estimate in the network is
\begin{eqnarray}
\underline{\widehat{\mb{x}}}_{k|k}\triangleq
\left[
\begin{array}{c}
\widehat{\mb{x}}^1_{k|k}\\
\widehat{\mb{x}}^2_{k|k}\\
\vdots \\
\widehat{\mb{x}}^N_{k|k}
\end{array}
\right].
\end{eqnarray}
Considering~$\underline{\widehat{\mb{x}}}_{k|k}$ to be an estimate of some state, we seek the corresponding dynamical system to this state-estimate. Clearly, the corresponding dynamical system has the following \emph{global} state vector:
\begin{eqnarray}\label{gst}
\underline{\mb{x}}_{k} \triangleq \left[
\begin{array}{cccc}
\mb{x}_k^\top & \mb{x}_k^\top & \ldots &\mb{x}_k^\top
\end{array}
\right]^\top=\mb{1}_N\otimes \mb{x}_k,
\end{eqnarray}
where~$\mb{1}_N$ is a column vector of~$N$ ones. To this end, let us assume that the dynamics associated to the above \emph{global} state-vector,~$\underline{\mb{x}}_{k}$, are given by some linear system:
\begin{eqnarray}\label{sys5}
\underline{\mb{x}}_{k+1} = Z \underline{\mb{x}}_{k} + \underline{\mb{v}}_{k},
\end{eqnarray}
where we have~$Z\in\mc{Z}$, and~$\mc{Z}$ is defined as a \emph{class} of system matrices such that if we choose any matrix~$Z\in\mc{Z}$, Eq.~\eqref{sys5} remains a valid representation of the global state vector as given by concatenating the system dynamics of Eq.~\eqref{sys1}. We now characterize this class of system matrices,~$\mc{Z}$. We have
\begin{equation}
\underline{\mb{x}}_{k+1} = \mb{1}_N\otimes (A\mb{x}_{k}+\mb{v}_k)
= \underbrace{(W\otimes A)}_{\mc{Z}}\mb{x}_{k}+\underbrace{\mb{1}_N\otimes\mb{v}_k}_{\underline{\mb{v}}_{k}},
\end{equation}
where the last equality follow if and only if~$W$ is stochastic. It is, in fact, quite straightforward to show that
\begin{eqnarray}
\mb{1}_N\otimes A\mb{x}_k = (W\otimes A)\mb{x}_k,
\end{eqnarray}
for any stochastic matrix,~$W$, leading to the conclusion that any matrix that cannot be decomposed as~$W\otimes A$ is not a system matrix for the dynamics described by~$\underline{\mb{x}}_{k+1}$, i.e.
\begin{eqnarray}
\mc{Z} = \{Z~|~Z=(W\otimes A) ~ \mbox{and}~ W \mbox{ is stochastic}\}.
\end{eqnarray}
The propositions below follow from the above arguments.
\begin{prop}\label{p0}
The distributed estimation of the dynamics in Eq.~\eqref{sys1} monitored by agents according to Eq.~\eqref{H_i}, interacting over a communication graph,~$G_W$, is equivalent to the centralized estimation of the following system:
\begin{eqnarray}\label{cf1}
\underline{\mb{x}}_{k+1} &=& (W\otimes A)\underline{\mb{x}}_{k} + \underline{\mb{v}}_{k+1},\\\label{cf2}
\mb{z}_k &\triangleq & D_H\underline{\mb{x}}_{k} + \widetilde{\mb{r}}_k,
\end{eqnarray}
where~$W$ is stochastic and is such that its sparsity (zero/non-zero pattern) is the same as of the adjacency matrix,~$\mbox{Adj}(\mc{G}_W)$.
\end{prop}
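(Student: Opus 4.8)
The plan is to establish the stated equivalence in two halves — the observation model~\eqref{cf2} and the dynamical model~\eqref{cf1} — and, for the latter, to argue both inclusions defining the class~$\mc{Z}$: every admissible global update is~$W\otimes A$ with~$W$ stochastic and~$\mc{G}_W$-compatible, and conversely every such~$W\otimes A$ is admissible. The observation half is essentially bookkeeping: stacking the per-agent fused observations~$\mb{z}_k^i=\wt{H}_i\mb{x}_k+\wt{\mb{r}}_k^i$ from Eqs.~\eqref{Usys1}--\eqref{Usys2}, using~$D_H=\mathrm{blkdiag}(\wt{H}_1,\dots,\wt{H}_N)$ together with~$\underline{\mb{x}}_k=\mb{1}_N\otimes\mb{x}_k$, gives~$\mb{z}_k=D_H\underline{\mb{x}}_k+\wt{\mb{r}}_k$ directly, so I would dispatch~\eqref{cf2} in one line.

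For the dynamics I would first fix the structural form forced by the distributed protocol: agent~$i$ may only fuse the estimates of the agents it communicates with and then propagate the fused estimate through the system matrix, so the block~$(i,j)$ of any admissible global update is~$w_{ij}A$ with~$w_{ij}=0$ whenever~$j\notin\mc{D}_i$; equivalently~$Z=W\otimes A$ with~$\mathrm{supp}(W)=\mathrm{supp}(\mathrm{Adj}(\mc{G}_W))$ (self-weights included). Under this restriction, sufficiency is the Kronecker mixed-product identity: if~$W$ is (row-)stochastic then~$W\mb{1}_N=\mb{1}_N$, hence
\[
(W\otimes A)\,\underline{\mb{x}}_k=(W\otimes A)(\mb{1}_N\otimes\mb{x}_k)=(W\mb{1}_N)\otimes(A\mb{x}_k)=\mb{1}_N\otimes(A\mb{x}_k),
\]
and adding~$\underline{\mb{v}}_k\triangleq\mb{1}_N\otimes\mb{v}_k$ reproduces~$\underline{\mb{x}}_{k+1}=\mb{1}_N\otimes\mb{x}_{k+1}$ exactly, i.e.~\eqref{cf1} holds, so~$W\otimes A\in\mc{Z}$.

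For necessity I would take any~$Z=W\otimes A\in\mc{Z}$. Being a valid representation means~$Z\underline{\mb{x}}_k+\underline{\mb{v}}_k=\underline{\mb{x}}_{k+1}=\mb{1}_N\otimes(A\mb{x}_k+\mb{v}_k)$ for every trajectory; matching the injected noise forces~$\underline{\mb{v}}_k=\mb{1}_N\otimes\mb{v}_k$, and then the deterministic part forces~$(W\otimes A)(\mb{1}_N\otimes\mb{x})=\mb{1}_N\otimes(A\mb{x})$, i.e.~$(W\mb{1}_N-\mb{1}_N)\otimes(A\mb{x})=\mb{0}$, for all~$\mb{x}\in\mathbb{R}^n$. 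Choosing~$\mb{x}$ with~$A\mb{x}\neq\mb{0}$ (possible unless~$A$ is the trivial zero structure, in which case the system is degenerate) yields~$W\mb{1}_N=\mb{1}_N$, i.e.~$W$ is stochastic; combined with the support constraint this gives exactly the claimed characterization of~$\mc{Z}$, and Proposition~\ref{p0} follows.

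The step I expect to require the most care in the write-up is the necessity direction, because the bare requirement that~\eqref{sys5} reproduce the trajectories~$\mb{1}_N\otimes\mb{x}_k$ only constrains~$Z$ on the consensus subspace~$\{\mb{1}_N\otimes\mb{x}:\mb{x}\in\mathbb{R}^n\}$; a matrix~$Z$ that agrees with~$W\otimes A$ there but acts arbitrarily on the complementary subspace would also ``work'' in that weak sense. The honest resolution, which I would state explicitly, is that~$\mc{Z}$ is implicitly the set of global updates compatible with a distributed fuse-then-propagate protocol, which pins the block structure down to~$w_{ij}A$ from the outset; once that is granted, stochasticity of~$W$ is precisely the cross-agent consistency condition and its sparsity pattern is precisely the inter-agent communication constraint imposed by~$\mc{G}_W$.
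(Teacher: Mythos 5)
Your proposal is correct and follows essentially the same route as the paper: the paper's own justification of Proposition~\ref{p0} is precisely the Kronecker identity $(W\otimes A)(\mb{1}_N\otimes\mb{x}_k)=(W\mb{1}_N)\otimes(A\mb{x}_k)=\mb{1}_N\otimes(A\mb{x}_k)$ for stochastic $W$, together with the block-diagonal stacking of the fused observations into $D_H$. Your added care on the necessity direction --- noting that the trajectory constraint alone only pins $Z$ down on the consensus subspace, and that the block form $w_{ij}A$ must be imposed by the distributed fuse-then-propagate protocol --- is a more explicit treatment of a point the paper leaves implicit, but it is a refinement of the same argument rather than a different one.
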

We are now in a position to write the optimal filtering equations for the centralized system (equivalent to the distributed estimation problem) in Eqs.~\eqref{cf1}-\eqref{cf2}:
\begin{eqnarray}\label{cfe1}
\wh{\underline{\mb{x}}}_{k|k-1} &=& (W\otimes A)\wh{\underline{\mb{x}}}_{k-1|k-1},\\\label{cfe2}
\wh{\underline{\mb{x}}}_{k|k} &=& \wh{\underline{\mb{x}}}_{k|k-1} + \underline{K}_k\left(\mb{z}_k - D_H\wh{\underline{\mb{x}}}_{k|k-1}\right),
\end{eqnarray}
where~$\underline{K}_k$ is the Kalman gain. The following proposition formally defines the distributed observability.
\begin{prop}\label{1}
A dynamical system monitored by a network of interacting agents is \textit{distributively observable} if and only if~$(W \otimes A, D_H)$ is observable, where~$W$ is a stochastic matrix,~$W$, and has the same sparsity as that of the adjacency matrix,~$\mbox{Adj}(\mc{G}_W)$.
\end{prop}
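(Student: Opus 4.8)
The plan is to obtain Proposition~\ref{1} as an immediate consequence of Proposition~\ref{p0} together with the classical theory of linear (Kalman) filtering. By Proposition~\ref{p0}, running the distributed estimator at all~$N$ agents over~$\mc{G}_W$ is \emph{exactly} equivalent to running a single centralized filter for the augmented pair~$(W\otimes A,\,D_H)$ of Eqs.~\eqref{cf1}--\eqref{cf2}, where~$W$ ranges over the stochastic matrices whose sparsity matches~$\mbox{Adj}(\mc{G}_W)$. Hence the only thing left to establish is the standard fact that a centralized linear estimator reconstructs the state~$\underline{\mb{x}}_k$ from the measurements~$\mb{z}_k$ --- with a uniformly bounded error covariance, and exactly in the noise-free limit --- if and only if the pair~$(W\otimes A,\,D_H)$ is observable. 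Since~$\underline{\mb{x}}_k=\mb{1}_N\otimes\mb{x}_k$, the~$i$-th block of~$\wh{\underline{\mb{x}}}_{k|k}$ is precisely the local estimate~$\wh{\mb{x}}^i_{k|k}$, so ``the centralized filter reconstructs~$\underline{\mb{x}}_k$'' is the same statement as ``\emph{every} agent reconstructs~$\mb{x}_k$'', which is exactly the meaning of distributed observability.

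Concretely, I would first write the error recursion of the filter in Eqs.~\eqref{cfe1}--\eqref{cfe2}. With~$\underline{\mb{e}}_{k|k}\triangleq\underline{\mb{x}}_k-\wh{\underline{\mb{x}}}_{k|k}$ and substituting Eqs.~\eqref{cf1}--\eqref{cf2},
\begin{eqnarray}\nonumber
\underline{\mb{e}}_{k|k}=\left(I-\underline{K}_k D_H\right)(W\otimes A)\,\underline{\mb{e}}_{k-1|k-1}+\left(I-\underline{K}_k D_H\right)\underline{\mb{v}}_{k}-\underline{K}_k\widetilde{\mb{r}}_k,
\end{eqnarray}
a standard linear propagation driven only by process and measurement noise. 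I would then invoke the classical result~\cite{bay}: the optimal gain~$\underline{K}_k$ keeps~$\underline{P}_{k|k}$ uniformly bounded iff~$(W\otimes A,\,D_H)$ is detectable, and in the noise-free case the state is recovered in at most~$nN$ steps iff the observability matrix~$\big[D_H^\top~\big((W\otimes A)^\top D_H^\top\big)~\cdots~\big((W\otimes A)^\top\big)^{nN-1}D_H^\top\big]^\top$ has full rank~$nN$, i.e. iff~$(W\otimes A,\,D_H)$ is observable. Combining this with the block-identification above closes both directions of the ``iff''.

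Two points require care, and the second is where I expect the real work to lie. First,~$(W\otimes A,\,D_H)$ is not a single numerical pair but a \emph{structured} one: its sparsity is fixed by~$\mc{G}_A$ and~$\mbox{Adj}(\mc{G}_W)$ while the (possibly time-varying) nonzero entries are arbitrary; thus ``observable'' must be read generically, in the sense of Theorem~\ref{th1}, and the statement holds for all parameter choices outside a measure-zero variety~\cite{woude:03}, consistently with the LSI framework. The more delicate point is the necessity (``only if'') direction: a priori the true trajectory lives in the~$n$-dimensional consensus subspace~$\{\mb{1}_N\otimes\mb{x}\}$, which is~$(W\otimes A)$-invariant because~$W\mb{1}_N=\mb{1}_N$, so one might hope to reconstruct~$\underline{\mb{x}}_k$ even when the full~$nN$-dimensional pair is unobservable. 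I would resolve this by noting that the estimator of Eqs.~\eqref{cfe1}--\eqref{cfe2} operates on all of~$\mathbb{R}^{nN}$ and, being distributed, cannot exploit the consensus structure without extra centralized coordination; hence any mode unobservable from~$D_H$ genuinely corrupts the estimate, so full~$nN$-rank observability of~$(W\otimes A,\,D_H)$ is both sufficient and necessary for the distributed scheme --- which is precisely what Proposition~\ref{1} asserts.
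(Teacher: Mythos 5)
Your proposal is correct and follows essentially the same route as the paper: the paper's own proof is a single sentence invoking the equivalence of Proposition~\ref{p0} between the distributed problem and the centralized estimation of the pair~$(W\otimes A, D_H)$, combined implicitly with the standard result that centralized estimation succeeds if and only if that pair is observable. Your version simply unpacks this with the explicit error recursion and flags the consensus-subspace subtlety in the necessity direction, which the paper does not discuss but which does not change the argument.
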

\begin{proof}
The proof relies on the fact that the distributed estimation problem is equivalent to the centralized estimation problem with the system matrices,~$W\otimes A$ and~$D_H$.
\end{proof}
In general the observability of the pair~$(W \otimes A, D_H)$ can be checked using the algebraic observability tests, i.e. the rank of the observability Grammian or the PBH test~\cite{bay}. These tests, however, require the explicit knowledge of the social phenomenon, i.e. the elements in the associated matrices. As we are concerned with LSI systems and social networks with LSI description, we are interested in developing observability tests that are based on the structure of the social digraph. The structure of the matrix~$W$ that makes~$(W \otimes A, D_H)$ observable, thus defines the topology of the underlying agent communication,~$\mathcal{G}_W$, see Proposition~\ref{p0}. In this paper, we will derive the necessary conditions on the communication topology,~$\mc{G}_W$, to recover this distributed observability. 

\subsection{Distributed estimator} \label{estimator}
Although the centralized system, Eqs.~\eqref{cf1}-\eqref{cf2}, is equivalent to the distributed estimation problem, we still have to verify that the centralized (optimal) filtering equations, Eqs.~\eqref{cfe1}-\eqref{cfe2}, can be implemented in a distributed fashion. To this end, we note that Eqs.~\eqref{cfe1}-\eqref{cfe2} consists of two information fusions: one is fusion in the predictor space, i.e. via~$W$ in Eq.~\eqref{cfe1}, and the other is the fusion in the observation space, i.e. via~$D_H$ in Eq.~\eqref{cfe2}. When the goal is to design a communication graph, it is advantageous to consider these two fusions separately. 

Following the above arguments, we consider the fusion in the predictor space to be implemented over an inter-agent communication graph,~$\mc{G}_\beta$, and the fusion in the observation space to be implemented over an inter-agent communication graph\footnote{Considering fusion over separate graphs is important because different connectivity conditions may be required for each fusion. For example, we showed in~\cite{asilomar11} that when the system matrix,~$A$, is full-rank, observation fusion is not required.},~$\mc{G}_\alpha$. We call~$\mc{G}_\alpha$ and~$\mc{G}_\beta$ respectively~$\alpha$-network and~$\beta$-network. With this two-layered approach to fusion, we can immediately note that~$\mc{G}_W$ is now given by~$\mc{G}_\alpha\cup\mc{G}_\beta$. Finally, we denote the neighborhood at agent~$i$ as~$\mc{N}_\alpha(i)$ and~$\mc{N}_\beta(i)$, in~$\mc{G}_\alpha$ and~$\mc{G}_\beta$, respectively.

We first consider the prediction in Eq.~\eqref{cfe1}. Assume~$W$ to be a stochastic matrix such that the zero and non-zero pattern follow the sparsity of the adjacency matrix,~$\mbox{Adj}(\mc{G}_\beta)$.
It can be immediately observed that Eq.~\eqref{cfe1} is distributed:
\begin{eqnarray}\label{lp}
\widehat{\mb{x}}^i_{k|k-1} = \sum_{j\in\{i\}\cup\mathcal{N}_\beta(i)} w_{ij}A\widehat{\mb{x}}^j_{k-1|k-1},
\end{eqnarray}
with~$W=\{w_{ij}\}$. Next consider fusion in the observation space, i.e. Eq.~\eqref{cfe2}. Note that since the Kalman gain,~$\underline{K}_k$, is a full matrix in general, Eq.~\eqref{cfe2} cannot be immediately distributed. In order to keep the implementation of Eq.~\eqref{cfe2} distributed and local, an alternate is to assume that the gain matrix,~$\underline{K}_k$, is block-diagonal, i.e.~$\underline{K}_k=\mbox{blockdiag}[K_k^i,\ldots,K_k^N]$, leading to
\begin{eqnarray}\label{le}
\widehat{\mb{x}}^i_{k|k} =\widehat{\mb{x}}^i_{k|k-1} + K_k^i \sum_{j\in\{i\}\cup \mc{N}_\alpha(i)}H_j^\top \left(\mb{y}^j_k-H_j\widehat{\mb{x}}^i_{k|k-1}\right).
\end{eqnarray}

By restricting the full gain matrix,~$\underline{K}_k$, to be block-diagonal (or to any non-full structure), the resulting distributed estimator, Eqs.~\eqref{lp}-\eqref{le}, is not equal to the centralized counterpart, Eqs.~\eqref{cfe1}-\eqref{cfe2}. In other words, the Kalman gain matrix, $\underline{K}_k$, cannot be computed locally from the standard procedures. However, computing such a constrained gain is possible via an iterative cone-complementarity optimization algorithm, see~\cite{5717159, rami:97} for details. Nevertheless, if the centralized equivalent has no solution, then the distributed problem cannot have a solution and it is imperative to ensure the observability of~$(W \otimes A, D_H)$. 

Finally, it can be shown that the networked error in the distributed estimator, Eqs.~\eqref{lp}-\eqref{le}, evolves as
\begin{eqnarray}\label{err1}
\mb{e}_{k} = (W\otimes A - K_kD_H(W\otimes A))\mb{e}_{k-1} +
\mb{q}_k,
\end{eqnarray}
which is stable if and only if~$(W \otimes A, D_H)$ is observable~\cite{jstsp}. This is consistent with Proposition~\ref{1}.

\section{Recovering Observability:\\ A Graph Theoretic Approach} \label{obsrec}
We now focus on developing the necessary conditions on the agent communication graph,~$\mc{G}_W=\mc{G}_\beta\cup\mc{G}_\alpha$, in order to recover the distributed observability of the pair~$(W\otimes A, D_H)$. We cast this problem from a structural viewpoint (as introduced in Theorem~\ref{th1}), i.e. the analysis is irrespective of the particular social phenomena, elements in~$W\otimes A$ and~$D_H$, and only relies on the (composite) social digraph, i.e. the structure of the system matrix,~$W\otimes A$, and the observation matrix,~$D_H$. In order to develop our results in the structural context, we need some advanced graph theoretic concepts that are covered below in Sections~\ref{agr1} and~\ref{agr2}. These concepts provide the foundations and related preliminaries for the agent classification in Section~\ref{nessmeas}, and for the necessary conditions on designing the agent communication network in Section~\ref{partial}.

\subsection{Contractions in Bipartite graphs}\label{agr1}
The graph-theoretic concepts and notations stated in this section can be found in~\cite{murota}. We cast these definitions in our framework of system digraphs and illustrate them in Figs.~\ref{fig3node} (b)--(e) using a~$3$-node system digraph, $\mc{G}_A$ of Fig.~\ref{fig3node} (a).

{\bf Bipartite graphs}: A bipartite graph,~$\Gamma=(\mc{V}^+,\mc{V}^-,\mc{E}_\Gamma)$, is such that its nodes can be partitioned into two disjoint sets:~$\mc{V}^+$ and~$\mc{V}^-$, such that all of its edges~$\in\mc{E}_\Gamma$ start in~$\mc{V}^+$ and end in~$\mc{V}^-$. We construct a bipartite graph,~$\Gamma_A$ in Fig.~\ref{fig3node} (b), from the social digraph,~$\mc{G}_A$, of Fig.~\ref{fig3node} (a): Define~$\mc{V}^+= \mc{X}$ and~$\mc{V}^- = \mc{X}$, with the edge set,~$\mc{E}_{\Gamma_A},$ defined as the collection of~$(v_j^-,v_i^+)$, if ~$(v_j,v_i) \in \mc{E}_A$.

{\bf Matching}: A matching,~$\underline{\mc{M}}$, on the system digraph,~$\mc{G}_{A}$, is defined as a subset of the edge set,~$\mc{E}_A$, with no common end-nodes. In the bipartite graph,~$\Gamma_A$, it is defined as a subset of edges where no two of them are incident on the same node, i.e. all the edges in~$\mc{M}$ are all disjoint. The number of edges,~$|\underline{\mc{M}}|$, in~$\underline{\mc{M}}$ is the size of the matching. A matching,~$\underline{\mc{M}}$, with maximum size, is called maximal matching, denoted by~$\mc{M}$, which, is non-unique, in general. A maximal matching is shown as blue highlighted edges in Fig.~\ref{fig3node} (a) and (b).

{\bf Matched/Unmatched nodes}: Let~$\mc{M}$ be a maximal matching on~$\Gamma_{A}$. Let~$\partial \mc{M}^+$ and~$\partial \mc{M}^-$ denote the nodes incident to~$\mc{M}$ in~$\mc{V}^+$ and~$\mc{V}^-$, respectively. Then~$\delta \mc{M} = \mc{V}^+ \backslash \partial \mc{M}^+$ is the set of unmatched nodes, shown in a box in Fig.~\ref{fig3node} (b). 

{\bf Auxiliary graph}, denoted by~$\Gamma^\mc{M} _A$, is a graph associated to a maximal matching,~$\mc{M}$. It is constructed by reversing all the edges of maximal matching,~$\mc{M}$, and keeping the direction of all other edges, i.e.~$\mc{E}_{\Gamma_A} \backslash \mc{M}$, in~$\Gamma_A$, see Fig.~\ref{fig3node} (c).

{\bf Alternating path}: In the auxiliary graph,~$\Gamma^\mc{M} _A$, an alternating path is a sequence of edges starting from an unmatched node in~$\delta \mc{M}$ and every second edge in~$\mc{M}$, see red highlights in Fig.~\ref{fig3node} (d). The name comes from the alternating edges between unmatched part,~$\mc{E} \backslash \mc{M}$, and matched part,~$\mc{M}$, in~$\Gamma^\mc{M}_A$.

{\bf Contraction}: In the auxiliary graph,~$\Gamma^\mc{M} _A$, assign a contraction,~$\mc{C}_i$, to every unmatched node,~$v_i \in \delta \mc{M}$. The set,~$\mc{C}_i$, contains all states reachable in $\mc{V}^+$ by alternating paths starting from~$v_i$. Further, define~$\mc{C}$ as the set of all~$\mc{C}_i$'s. Intuitively, in a contraction, states are contracted to a fewer number of nodes, shown in Fig.~\ref{fig3node} (e) as a social digraph where now the contraction is highlighted in green. 
\begin{figure}
\centering
\subfigure{\includegraphics[height=1.7in]{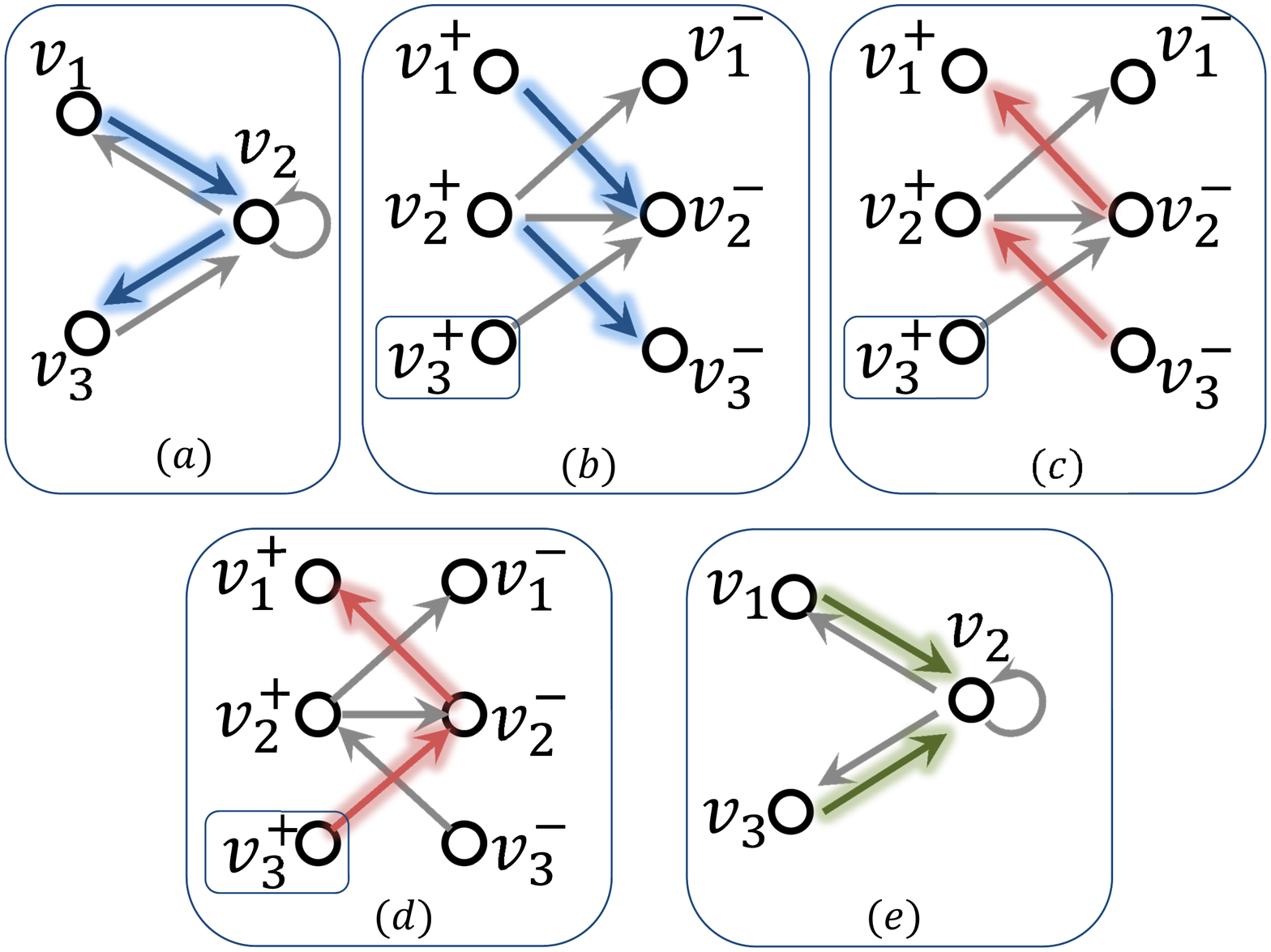}}
\hspace{0.3cm}
\subfigure{\includegraphics[height=1.7in] {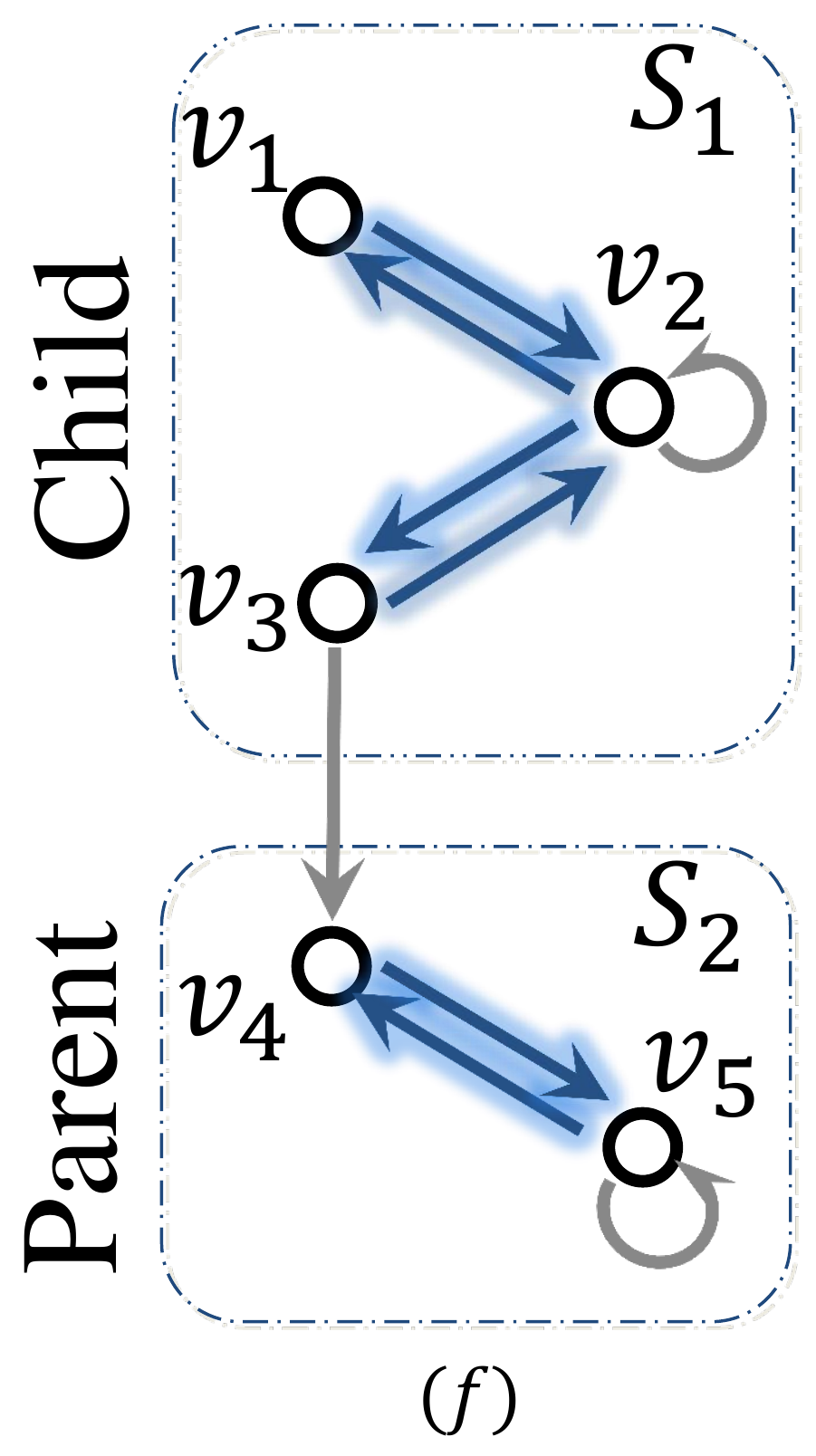}}
\caption{(a) A simple social digraph,~$\mc{G}_A$. (b) Bipartite graph,~$\Gamma_A$, obtained from~$\mc{G}_A$, where a maximal matching is highlighted in blue. (c) Auxiliary graph~$\Gamma^\mc{M}_A$; (d) Alternating path; (e) Contractions. (f) A social digraph to illustrate SCCs.}
\label{fig3node}
\end{figure}

\subsection{Strongly Connected Components}\label{agr2}
We illustrate on the social digraph,~$\mc{G}_A$ in Fig.~\ref{fig3node} (f).

{\bf Strong-connectivity}: A digraph is strongly-connected if every two nodes in the digraph are connected by a path, i.e.,~$v_i\overset{\scriptsize\mbox{path}}{\longrightarrow} v_j$ for every~$v_i, v_j$, in the digraph. 

{\bf SCC}: In a not strongly-connected digraph, a Strongly Connected Component (SCC),~$\mc{S}_i$, is defined as its maximal strongly-connected partitions. The highlighted blue edges in Fig.~\ref{fig3node} (f) represent strong connectivity in each SCC.

\textbf{Matched SCC}: An SCC,~$i$, is matched, denoted by~$\mc{S}^{\circlearrowleft}_i$, if it contains a union of \emph{disjoint} cycles covering all of its nodes. We denote the set of all matched SCCs by~$\mc{S}^{\circlearrowleft}$. In Fig.~\ref{fig3node} (f), $\mc{S}_1$ is un-matched and $\mc{S}_2$ is matched.

\textbf{Parent/Child SCC}: An SCC,~$i$, is \textit{parent}, denoted by~$\mc{S}^{p}_i$, if it has no outgoing edges to any other state in $\mc{G}_A$. Any non-parent SCC is \textit{child}, denoted by~$\mc{S}^{c}_i$. In Fig.~\ref{fig3node} (f),~$\mc{S}_1$ is child and~$\mc{S}_2$ is parent. Let~$\mc{S}^{p}$ be the set of all parent SCCs. Following this convention,~$\mc{S}_i^{\circlearrowleft,p}$ is the matched parent SCC,~$\mc{S}^{\circlearrowleft,p}$ is the set of matched parent SCCs, and so on.

\textbf{Partial order},~$\preceq$, defines the existence of edges within components. Mathematically,~$\mc{S}_i~\preceq \mc{S}_j$ implies that some node in~$\mc{S}_i$ has a path to some node in~$\mc{S}_j$. In Fig.~\ref{fig3node} (f),~$\mc{S}_1~\preceq \mc{S}_2$.

\begin{rem}\label{childSCC}
Every child SCC,~$\mc{S}^{c}_i$, has a parent SCC,~$\mc{S}^{p}_j$, i.e.,~$\mc{S}^c_i \preceq \mc{S} ^p_j$. Matched parent SCCs are defined over cyclic (matched) part of $\mc{G}_A$ and they lack any unmatched node. On the other hand, contractions are defined over unmatched part of $\mc{G}_A$, and include an unmatched node.
\end{rem}

\subsection{Computational Algorithms}\label{DMalg}
Given the system digraph,~$\mc{G}_A$ and its bipartite counterpart,~$\Gamma_A$, we can use efficient algorithms to compute the maximal matching,~$\mc{M}$, e.g. the maximum flow algorithm~\cite{hopcraft}. These three objects,~$\mc{G}_A,\Gamma_A,\mc{M}$, are then used in the Dulmage-Mendelsohn (DM) decomposition~\cite{dulmage58,murota} to obtain the set of contractions,~$\mc{C}$, and the set of matched parent SCCs,~$\mc{S}^{\circlearrowleft,p}$. Maximal matchings can be efficiently computed in~$\mc{O}(\sqrt{n} |\mc{E}_A|)$ using the approach in~\cite{maxmatching}. Efficient algorithms to decompose a digraph into maximal SCCs include the well-known \textit{Tarjan's algorithm}~\cite{tarjan} and related Depth-First-Search (DFS) algorithms~\cite{algorithm}. These algorithms have polynomial order in~$|\mc{E}_A|$.

\section{Necessary measurements for observability}  \label{nessmeas}
In this section, we find necessary observations for centralized observability, i.e. observability of~$(A,H)$. More precisely, we look for the states in the social digraph,~$\mc{G}_A$, whose observations ensure generic observability. We approach this problem in two stages to recover both conditions in Theorem~\ref{1}. First, we find necessary agents to meet the accessibility property and, second, we look for agents recovering the~$S$-rank.

\subsection{Recovering accessibility}
It is known that the accessibility of system states has a direct connection with SCCs in the social digraph~\cite{asilomar11}. The reason is that states in SCC are all accessible to each-other; if an agent,~$i$, is accessible to a state in SCC,~$\mc{S} _{j}$, it is accessible to
all other states in~$\mc{S}_{j}$. Every state~$x_k$ included in~$\mc{S} _{j}$ is connected to agent,~$i$, via a path of state nodes, i.e.~$x_k \overset{\scriptsize\mbox{path}}{\longrightarrow} \mc{Y}$.

\begin{theorem} \label{parent}
At least one observation from every matched parent SCC is necessary to recover the observability of the social digraph,~$\mc{G}_A$, .
\end{theorem}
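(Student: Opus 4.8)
The plan is to argue by contradiction using the two-part characterization of centralized observability in Theorem \ref{th1}, isolating the $S$-rank condition as the one that forces the requirement. Suppose some matched parent SCC, call it $\mc{S}^{\circlearrowleft,p}_i$, carries no observation, i.e. no state in $\mc{S}^{\circlearrowleft,p}_i$ is the begin-node of a length-one edge into $\mc{Y}$, and moreover — this is the key point — no state of $\mc{S}^{\circlearrowleft,p}_i$ lies on any $\mc{Y}$-connected path at all. The last clause is where "parent" does the work: by Remark \ref{childSCC} and the Parent/Child definition, $\mc{S}^{p}_i$ has no outgoing edges to any other state of $\mc{G}_A$, so the only way a state of $\mc{S}^{p}_i$ could reach $\mc{Y}$ is through an edge from $\mc{S}^{p}_i$ directly into an observation node; if none of the states in $\mc{S}^{\circlearrowleft,p}_i$ is observed, then none of them is accessible, violating the \textbf{Accessibility} part of Theorem \ref{th1}. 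So far this only shows that \emph{some} parent SCCs need observations; I must still upgrade "some state observed" to the statement, and more importantly rule out the possibility that the SCC being matched is irrelevant.

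Here is the refinement. The claim is specifically about \emph{matched} parent SCCs, and the reason the "matched" qualifier matters is the following: a matched parent SCC $\mc{S}^{\circlearrowleft,p}_i$ contains a union of disjoint cycles covering all its nodes (its definition). In the $S$-rank disjoint-cycle/$\mc{Y}$-path family of Theorem \ref{th1} one may cover every node of $\mc{S}^{\circlearrowleft,p}_i$ using only these internal cycles — that is always feasible and uses no observation. Conversely, if one insists on routing some state of $\mc{S}^{\circlearrowleft,p}_i$ to $\mc{Y}$ along a disjoint $\mc{Y}$-connected path, that path must exit the SCC, but a parent SCC has no edges leaving to other states, so the path has to end at an observation node adjacent to the SCC — which requires that observation to exist. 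Thus: \emph{centralized observability is impossible unless at least one state of each matched parent SCC has a direct edge into $\mc{Y}$.} I would formalize this by taking any valid disjoint cycle/$\mc{Y}$-path cover guaranteed by Theorem \ref{th1}, restricting it to the vertex set of $\mc{S}^{\circlearrowleft,p}_i$, and observing that the only edges available there are the internal edges of the SCC plus edges into $\mc{Y}$; hence either every node of the SCC is covered by an internal cycle (fine, but then we still need Accessibility, handled above) or at least one node is covered by a $\mc{Y}$-connected path, which pins down a necessary observation. Combining the Accessibility argument (which already forces at least one observation) gives the theorem as stated.

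The step I expect to be the main obstacle is making the "restrict the cover to the SCC and count available edges" argument airtight, in particular handling the interaction between the two conditions of Theorem \ref{th1} simultaneously rather than one at a time: a priori an adversary might try to satisfy Accessibility for $\mc{S}^{\circlearrowleft,p}_i$ via one observation-adjacency while satisfying the $S$-rank cover via the internal cycles, so I want a single clean lemma saying that for a parent SCC, the \emph{only} routes to $\mc{Y}$ are edges from the SCC's own states into $\mc{Y}$, which collapses both conditions onto the same necessary edge. A secondary subtlety is that matchings and the DM decomposition are non-unique, so "matched parent SCC" must be shown to be a well-defined structural object of $\mc{G}_A$ (independent of the chosen maximal matching) before the statement even makes sense — I would either cite this invariance from \cite{murota,dulmage58} or note that $\mc{S}^{\circlearrowleft,p}$ is characterized intrinsically as a parent SCC admitting a perfect cycle-cover of its nodes, which is matching-independent.
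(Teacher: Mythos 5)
Your core argument---that a parent SCC has no outgoing edges to other states, so its states can reach $\mathcal{Y}$ only through a direct observation edge, and accessibility therefore fails without one---is exactly the paper's (one-line) proof, which cites accessibility alone. The additional $S$-rank/cycle-cover discussion in your second paragraph is not needed for necessity; in the paper the ``matched'' qualifier serves only to separate these observations (Type-$\beta$, forced by accessibility) from those in unmatched parent SCCs, which are already forced by the contraction argument of Theorem~\ref{contraction}.
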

\begin{proof}
If parent SCC,~$\mc{S}^{\circlearrowleft,p}_i$, has no outgoing edges, adding an observation is the only way to recover the accessibility of its states, see~\cite{asilomar11} for more details.
\end{proof}

\begin{cor}
If a social digraph is observable, then every matched parent SCC,~$\mc{S}^{\circlearrowleft,p}_j$, has a link to an agent in~$\mc{G}_{\scriptsize \mbox{sys}}$. Hence, all~$\mc{S}^{\circlearrowleft,c}_i$'s are accessible:~$\mc{S}^{\circlearrowleft ,c}_i \overset{\scriptsize\mbox{path}}{\longrightarrow} \mc{S}^{\circlearrowleft,p}_j \overset{\scriptsize\mbox{path}}{\longrightarrow} \mc{Y}$.
\end{cor}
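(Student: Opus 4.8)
The plan is to read the corollary as the contrapositive of Theorem~\ref{parent} sharpened by a short walk along the partial order~$\preceq$ on SCCs; essentially no new machinery is needed.

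First I would dispose of the agent-link claim. Assume the social digraph is observable and suppose, toward a contradiction, that some matched parent SCC~$\mc{S}^{\circlearrowleft,p}_j$ has no edge to any agent in~$\mc{G}_{\scriptsize \mbox{sys}}$. Since~$\mc{S}^{\circlearrowleft,p}_j$ is a parent SCC it also has, by definition, no edge leaving it to another SCC; hence no state of~$\mc{S}^{\circlearrowleft,p}_j$ can be the begin-node of a~$\mc{Y}$-connected path, so the accessibility half of Theorem~\ref{th1} fails, contradicting observability. This is simply Theorem~\ref{parent} restated: since an observation from every matched parent SCC is necessary, an observable digraph must supply one, i.e. a link to some agent.

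Next I would establish accessibility of the matched child SCCs. By Remark~\ref{childSCC} every child SCC lies below some parent SCC in~$\preceq$; I would make this precise by noting that~$\preceq$ is a strict partial order on the finite set of distinct SCCs (two distinct SCCs cannot each reach the other without coinciding), so starting from~$\mc{S}^{\circlearrowleft,c}_i$ and repeatedly following an edge that leaves the current SCC --- one exists at every step because a child SCC, by definition, has an edge to another SCC --- terminates at a maximal element of~$\preceq$, which is a parent SCC~$\mc{S}^{p}_j$. Splicing this chain of state-node edges onto the agent link at~$\mc{S}^{p}_j$ from the first part yields~$\mc{S}^{\circlearrowleft,c}_i \overset{\scriptsize\mbox{path}}{\longrightarrow} \mc{S}^{p}_j \overset{\scriptsize\mbox{path}}{\longrightarrow} \mc{Y}$, so every state of~$\mc{S}^{\circlearrowleft,c}_i$ begins a~$\mc{Y}$-connected path, i.e. is accessible.

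The step I expect to be the real obstacle is reconciling ``matched'' with ``unmatched'' parent SCCs: Theorem~\ref{parent} forces observations only at matched parent SCCs, yet the walk above can terminate at an unmatched parent SCC. I would handle this by observing that the accessibility condition of Theorem~\ref{th1} by itself forces \emph{every} parent SCC --- matched or not --- to carry a direct agent link (the argument of the first part with ``matched'' deleted), so the chain claimed in the corollary holds through whichever parent SCC ends the walk; when that SCC happens to be matched one recovers verbatim the stated chain from~$\mc{S}^{\circlearrowleft,c}_i$ through~$\mc{S}^{\circlearrowleft,p}_j$ to~$\mc{Y}$. The remainder is a finite induction along the SCC condensation and requires no computation.
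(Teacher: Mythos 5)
Your proof is correct and follows the route the paper intends: the corollary is stated without proof as an immediate consequence of Theorem~\ref{parent} (contrapositive: observability forces an agent link at every matched parent SCC) combined with Remark~\ref{childSCC} and the acyclicity of the SCC condensation under~$\preceq$. Your extra step---noting that the walk up the partial order may terminate at an \emph{unmatched} parent SCC, and that the accessibility condition of Theorem~\ref{th1} independently forces an agent link at every parent SCC regardless of matching---patches a genuine looseness in the corollary's statement that the paper silently glosses over, and is a worthwhile addition rather than a deviation.
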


\begin{defn}
We name the observations in matched parent SCCs as \textit{Type-$\beta$}, and agents measuring them as~$\beta$-agents.
\label{beta}
\end{defn}
\begin{rem}
To recover accessibility, agents observing different states of the same parent SCC belong to an \emph{equivalence class}\footnote{An equivalence relation is defined as having three requirements of reflexivity, symmetry, and transitivity.} and are called equivalent. They are called independent if their observations are from distinct parent SCCs.
\end{rem}

\subsection{Recovering~$S$-rank}
We now characterize the necessary observations for recovering the~$S$-rank condition. Specifically, we show that adding observation in contraction sets recovers the~$S$-rank of~$\left[ \begin{array}{cc} A^\top & H^\top \end{array} \right]^\top$. To study the contraction property of the system we review some useful lemmas below.

\begin{lem}\label{linL}
The~$S$-rank of~$\left[ \begin{array}{cc} A^\top & H^\top \end{array} \right]^\top$ equals the size of a maximal matching~$\mc{M}$ in its composite digraph,~$\mc{G}_{sys}$, see~\cite{Liu-nature}.
\end{lem}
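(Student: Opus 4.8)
The plan is to reduce the statement to the classical Frobenius--König characterization of the generic (structural) rank of a structured matrix, and then to identify the relevant row--column bipartite graph with the bipartite representation of the composite digraph $\mc{G}_{sys}$.

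\textbf{Step 1 (structural rank $=$ maximum bipartite matching).} First I would record the standard fact that, for any structured matrix $M=\{m_{ij}\}$, the $S$-rank of $M$ equals the maximum size of a matching in the bipartite graph $\mathcal{B}(M)$ whose two vertex classes are the rows and columns of $M$, with an edge joining row $i$ to column $j$ whenever $m_{ij}\neq 0$. To justify it, expand the determinant of an arbitrary $r\times r$ submatrix $M'$: $\det M' = \sum_{\sigma}\operatorname{sgn}(\sigma)\prod_i m'_{i,\sigma(i)}$. Treating the free (nonzero) entries of $M$ as algebraically independent parameters, every permutation $\sigma$ all of whose factors are nonzero contributes a distinct monomial, so no cancellation can occur; hence $\det M'\not\equiv 0$ if and only if some such $\sigma$ exists, i.e. if and only if the rows and columns of $M'$ admit a perfect matching in $\mathcal{B}(M')$. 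Letting $M'$ range over all submatrices, the largest such $r$ is exactly the maximum matching size of $\mathcal{B}(M)$.

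\textbf{Step 2 (identification with $\mc{G}_{sys}$).} Apply Step 1 to $M=\left[A^\top~H^\top\right]^\top\in\mathbb{R}^{(n+p)\times n}$: its rows are indexed by $\mc{X}\cup\mc{Y}$ (the first $n$ rows carrying $A$, the remaining $p$ carrying $H$) and its columns by $\mc{X}$, with a nonzero entry in position $(x_i,x_j)$ iff $a_{ij}\neq 0$ and in position $(y_i,x_j)$ iff $h_{ij}\neq 0$. Comparing with the construction of Section~\ref{agr1}, $\mathcal{B}(M)$ is precisely the bipartite representation $\Gamma_{sys}$ of $\mc{G}_{sys}$: take $\mc{V}^+=\mc{X}$ (the tails of the arcs of $\mc{G}_{sys}$ --- only states emit arcs) to be the columns of $M$, and $\mc{V}^-=\mc{X}\cup\mc{Y}$ (the heads, i.e. states and observations) to be the rows of $M$, so that each arc of $\mc{G}_{sys}$ with tail $x_j$ and head $x_i$ (resp. $y_i$) is exactly the bipartite edge $\{x_j^+,x_i^-\}$ (resp. $\{x_j^+,y_i^-\}$), and conversely. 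Moreover, a matching of $\mc{G}_{sys}$ in the sense of Section~\ref{agr1} --- arcs with pairwise distinct tails and pairwise distinct heads, a state being allowed to appear once as a tail and once as a head --- is by definition a matching of $\Gamma_{sys}$, and vice versa. Therefore the maximum matching size of $\mc{G}_{sys}$ equals that of $\Gamma_{sys}=\mathcal{B}(M)$, which by Step 1 equals the $S$-rank of $\left[A^\top~H^\top\right]^\top$.

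\textbf{Main obstacle.} The routine identifications in Step 2 are not the delicate point; the crux is the ``no cancellation'' claim in Step 1, namely that the determinant polynomial cannot vanish identically once a nonzero-term permutation exists. This uses the structural hypothesis that the distinct nonzero entries of $A$ and $H$ are free parameters --- exactly the LSI / structured-systems standing assumption discussed after Theorem~\ref{th1} --- which I would state explicitly (and, if brevity is preferred, simply invoke the Frobenius--König theorem for this combinatorial core, as in~\cite{Liu-nature}). A secondary technicality is keeping the edge-direction convention of $\mc{G}_{sys}$ (the arc $(x_i,x_j)$ meaning $x_i\leftarrow x_j$) consistent across both the $A$-block and the $H$-block, so that ``tails $=$ columns, heads $=$ rows'' applies uniformly; once this is fixed, the identification $\mathcal{B}(M)=\Gamma_{sys}$ is immediate.
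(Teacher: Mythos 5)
Your proposal is correct. The paper gives no proof of this lemma---it is stated as a known result with a pointer to the cited reference---and your argument is exactly the standard one underlying that result: generic rank equals term rank by the no-cancellation expansion of the determinant over algebraically independent entries, term rank equals maximum bipartite matching, and the row--column bipartite graph of $\left[A^\top~H^\top\right]^\top$ is precisely the bipartite representation of~$\mc{G}_{\scriptsize \mbox{sys}}$ with~$\mc{V}^+=\mc{X}$ and~$\mc{V}^-=\mc{X}\cup\mc{Y}$. Nothing further is needed.
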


\begin{lem}
Any choice of maximal matching gives the same contraction set. Mathematically, having two maximal matchings,~$\mc{M}_1 \neq \mc{M}_2$, any unmatched state~$v_i \in \delta \mc{M}_1$ can be reached along an alternating path from a state~$v_j \in \delta \mc{M}_2$; the set~$\mc{C}$ is the same for both of matchings, see~\cite{murota, Berge}.
\end{lem}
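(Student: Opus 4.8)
The plan is to reduce the statement to the classical symmetric-difference analysis of maximum matchings in a bipartite graph. First I would record that, by the definition of a maximal matching, $\mc{M}_1$ and $\mc{M}_2$ have the same size (by Lemma~\ref{linL} this common size is the structural rank), so both are in fact \emph{maximum} matchings and $|\delta\mc{M}_1|=|\delta\mc{M}_2|$. It is then convenient to work with the set $R(\mc{M})\subseteq\mc{V}^+$ consisting of $\delta\mc{M}$ together with every $\mc{V}^+$-node reachable from $\delta\mc{M}$ along an $\mc{M}$-alternating path in $\Gamma^{\mc{M}}_A$; by the definition of a contraction, $R(\mc{M})=\bigcup_i\mc{C}_i$, so both assertions of the lemma are statements about $R(\mc{M})$.

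Next I would analyze the symmetric difference $\mc{M}_1\triangle\mc{M}_2$. As the symmetric difference of two matchings, it is a vertex-disjoint union of simple paths and cycles whose edges alternate between $\mc{M}_1\setminus\mc{M}_2$ and $\mc{M}_2\setminus\mc{M}_1$, every cycle having even length. Since both matchings are maximum, no component can be an augmenting path for either one, which forces every path component to carry equally many $\mc{M}_1$- and $\mc{M}_2$-edges, hence to have even length; inspecting the two end-edges of such a component then shows that one endpoint lies in $\delta\mc{M}_1$ and the other in $\delta\mc{M}_2$, while nodes exposed by both matchings form the trivial one-vertex components. This already proves the first assertion: a node $v_i\in\delta\mc{M}_1$ is either exposed by $\mc{M}_2$ as well, or is the end of an even path component $P$ of $\mc{M}_1\triangle\mc{M}_2$ whose other end is some $v_j\in\delta\mc{M}_2$, and reading $P$ starting at $v_j$ is exactly an $\mc{M}_2$-alternating path landing on $v_i$ (and symmetrically with $\mc{M}_1$ and $\mc{M}_2$ interchanged).

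To upgrade this to the set identity $R(\mc{M}_1)=R(\mc{M}_2)$, I would establish the matching-independent characterization $R(\mc{M})=\{v\in\mc{V}^+ : v \text{ is left exposed by some maximum matching of }\Gamma_A\}$; this finishes the proof because the right-hand side mentions no particular matching. For the forward inclusion, if $v\in R(\mc{M})$ is $\mc{M}$-matched, then the $\mc{M}$-alternating path from an exposed node to $v$ joins two nodes of $\mc{V}^+$, hence has even length and equally many matched and unmatched edges, so flipping it produces another maximum matching that leaves $v$ exposed. For the reverse inclusion, if a maximum matching $\mc{M}^{\ast}$ leaves $v$ exposed, then in $\mc{M}\triangle\mc{M}^{\ast}$ the node $v$ is either also exposed by $\mc{M}$ (so $v\in\delta\mc{M}\subseteq R(\mc{M})$) or is the end of an even path component whose other end is exposed by $\mc{M}$, and reading that component from the latter end exhibits an $\mc{M}$-alternating path to $v$, so again $v\in R(\mc{M})$.

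The step I expect to be the main obstacle is the alternating-path bookkeeping, because being $\mc{M}_1$-alternating and being $\mc{M}_2$-alternating are a priori different notions: one must check carefully that the path produced by flipping is genuinely a matching of the same cardinality, and that an even path component, read from the correct endpoint, really satisfies the ``every second edge matched'' condition with respect to the intended matching and ends in $\mc{V}^+$. These verifications become routine once the even-length property of the path components is in hand, so the only real lever is the \emph{no augmenting path} consequence of maximality, while the bipartiteness of $\Gamma_A$ makes the parity accounting automatic. Alternatively one could simply invoke the Dulmage--Mendelsohn canonical decomposition~\cite{dulmage58,murota,Berge}, which packages precisely this invariance, but its proof rests on the same exchange argument.
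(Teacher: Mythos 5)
Your argument is correct, and it is in substance the same proof that the paper implicitly relies on: the paper offers no argument of its own for this lemma and simply points to Berge and Murota, whose treatment is exactly the symmetric-difference/exchange machinery you develop. The decomposition of $\mc{M}_1\,\triangle\,\mc{M}_2$ into alternating even paths and cycles, the use of maximality to exclude augmenting paths, and the characterization of the reachable set $R(\mc{M})$ as the set of nodes left exposed by \emph{some} maximum matching are all sound; the last of these is moreover precisely the fact the paper later invokes when it asserts, after Theorem~\ref{contraction}, that for any state in a contraction there exists a maximal matching leaving that state unmatched.

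One caveat concerns your reduction step, where you declare that both claims of the lemma are statements about $R(\mc{M})=\bigcup_i\mc{C}_i$. The second claim literally concerns the \emph{collection} $\mc{C}=\{\mc{C}_i\}$ of individual contractions, and your proof establishes only the invariance of the union (together with the invariance of the count $|\delta\mc{M}|$, which follows from all maximum matchings having equal size). Under the paper's reachability definition of $\mc{C}_i$, the collection itself need not be matching-invariant when the deficiency exceeds one and the reachable sets overlap: take $\mc{V}^+=\{a,b,c\}$ all adjacent to a single node of $\mc{V}^-$; the matching saturating $a$ yields the contractions $\{a,b\}$ and $\{a,c\}$, while the matching saturating $b$ yields $\{a,b\}$ and $\{b,c\}$. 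So the union-level statement you prove is the correct (and strongest true) reading of the lemma, and it is all that the subsequent results use --- namely that $|\mc{C}|=|\delta\mc{M}|=n-S\mbox{-rank}(A)$ and that membership in some contraction is matching-independent --- but this weakening should be stated explicitly rather than folded silently into the definition of $R(\mc{M})$.
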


\begin{lem} \label{berge}
For any choice of maximal matching,~$\mc{M}$, there is only one unmatched node in every contraction,~$\mc{C}_i$, see~\cite{Berge}.
\end{lem}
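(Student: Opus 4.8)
The plan is to read the claim off the construction of the auxiliary graph $\Gamma^\mc{M}_A$. By definition, $\mc{C}_i$ is the set of nodes of $\mc{V}^+$ reachable from the unmatched node $v_i\in\delta\mc{M}$ along $\mc{M}$-alternating paths, and $v_i\in\mc{C}_i$ via the trivial path; so it suffices to show that every node of $\mc{C}_i$ other than $v_i$ is matched, i.e. lies in $\partial\mc{M}^+$.

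To this end I would first record how $\Gamma^\mc{M}_A$ is oriented: every edge of $\Gamma_A$ joins $\mc{V}^+$ to $\mc{V}^-$, and $\Gamma^\mc{M}_A$ reverses exactly the edges of $\mc{M}$, so in $\Gamma^\mc{M}_A$ the only edges entering a node of $\mc{V}^+$ are the reversed matching edges, while the only edges leaving a node of $\mc{V}^+$ are non-matching edges---this is precisely why reachability in $\Gamma^\mc{M}_A$ realizes $\mc{M}$-alternating paths. Now take any $v_j\in\mc{C}_i\backslash\{v_i\}$ and a directed path from $v_i$ to $v_j$ in $\Gamma^\mc{M}_A$. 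Since $v_i$ is unmatched it has no incident (reversed) matching edge, so the path must leave $v_i$ along a non-matching edge into $\mc{V}^-$, and every subsequent visit to $\mc{V}^+$---in particular the final arrival at $v_j$---is made along a reversed edge of $\mc{M}$. Hence $v_j$ is incident to a matching edge, i.e. $v_j\in\partial\mc{M}^+$, so $v_i$ is the unique unmatched node of $\mc{C}_i$.

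I would also give the classical reformulation via Berge's augmenting-path characterization of maximum matchings, which is where the maximality of $\mc{M}$ enters: if $\mc{C}_i$ contained a second unmatched node $v_j$, then undoing the reversals along an alternating path $P$ from $v_i$ to $v_j$ produces an $\mc{M}$-alternating path in $\Gamma_A$ both of whose endpoints are $\mc{M}$-unmatched; since both endpoints are unmatched, $P$ begins and ends with a non-matching edge, hence has odd length, so $P$ is $\mc{M}$-augmenting and $\mc{M}\triangle P$ is a matching of size $|\mc{M}|+1$, contradicting the maximality of $\mc{M}$. I expect the only genuinely delicate point to be the bookkeeping that keeps the three notions---directed path in $\Gamma^\mc{M}_A$, $\mc{M}$-alternating path in $\Gamma_A$, and $\mc{M}$-augmenting path---consistent across the edge reversals and the two bipartite sides of $\Gamma_A$; once the orientation conventions of Section~\ref{agr1} are fixed, the parity check that turns a path from $v_i$ to $v_j$ into an augmenting one is immediate and the lemma follows.
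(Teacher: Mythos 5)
The paper offers no proof of this lemma---it is stated as a known fact with a citation to Berge---so there is no in-paper argument to compare against; what matters is whether your proof stands on its own. Your first argument does, and cleanly: in $\Gamma^{\mc{M}}_A$ the non-matching edges are directed from $\mc{V}^+$ to $\mc{V}^-$ and only the reversed edges of $\mc{M}$ are directed into $\mc{V}^+$, so any directed path from $v_i\in\delta\mc{M}$ arrives at every subsequent $\mc{V}^+$-node along a matching edge; hence every node of $\mc{C}_i\backslash\{v_i\}$ lies in $\partial\mc{M}^+$ and $v_i$ is the unique unmatched node. That is a complete proof, and it is the natural one given the paper's definition of $\mc{C}_i$ via reachability in the auxiliary graph.

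Your secondary ``Berge reformulation,'' however, contains a parity slip and misattributes the role of maximality. An alternating path joining two unmatched vertices that both lie in $\mc{V}^+$ would have to begin and end with non-matching edges (odd number of edges) while also crossing between the two sides of the bipartition an even number of times (even number of edges); these are incompatible, so such a path cannot exist at all---it is not an $\mc{M}$-augmenting path, since augmenting paths join unmatched vertices on \emph{opposite} sides of the bipartition. Consequently the contradiction you reach is with bipartiteness, not with the maximality of $\mc{M}$: your first argument already proves the lemma for an arbitrary matching. Maximality is genuinely needed elsewhere---for the invariance of the contraction set under the choice of $\mc{M}$ (the preceding lemma) and for the identification of $|\mc{M}|$ with the $S$-rank---but not here. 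I would keep your first argument as the proof and either drop the second or restate it as the parity observation above.
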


\begin{lem} \label{sizerecovery}
Adding an observation of an unmatched node in~$\delta \mc{M}$ recovers the $S$-rank by~$1$,~\cite{murota}.
\end{lem}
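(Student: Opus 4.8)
The plan is to reduce the statement, via Lemma~\ref{linL}, to a one-line augmentation argument together with an equally short upper bound, working entirely at the level of maximal matchings in the bipartite graph. Fix notation: let $\mc{G}_{\scriptsize\mbox{sys}}$ be the composite digraph of $(A,H)$ and let $\Gamma_{\scriptsize\mbox{sys}}$ be its bipartite representation, constructed exactly as $\Gamma_A$ was from $\mc{G}_A$ in Section~\ref{agr1}, namely $\mc{V}^+=\mc{X}$ (indexing the $n$ columns of $\left[A^\top~H^\top\right]^\top$) and $\mc{V}^-=\mc{X}\cup\mc{Y}$ (indexing its $n+p$ rows), with an edge $(y^-,x_j^+)$ whenever $h_{yj}\neq 0$. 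By Lemma~\ref{linL}, $\mbox{$S$-rank}\left[A^\top~H^\top\right]^\top=|\mc{M}|$ for a maximal matching $\mc{M}$ of $\Gamma_{\scriptsize\mbox{sys}}$, and by hypothesis $\delta\mc{M}=\mc{V}^+\setminus\partial\mc{M}^+\neq\emptyset$ contains the node $v_i$. Adding an observation of $v_i$ appends one new observation node $y$ to $\mc{Y}$ (equivalently, one new vertex to $\mc{V}^-$) together with the edge $(y^-,v_i^+)$, and possibly some further edges that will play no role; call the resulting bipartite graph $\Gamma_{\scriptsize\mbox{sys}}'$, so that the updated $S$-rank is the size of a maximal matching of $\Gamma_{\scriptsize\mbox{sys}}'$.

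For the lower bound, I would observe that because $v_i\notin\partial\mc{M}^+$ and the vertex $y$ is brand new, the edge $(y^-,v_i^+)$ shares no endpoint with any edge of $\mc{M}$; hence $\mc{M}\cup\{(y^-,v_i^+)\}$ is a matching of $\Gamma_{\scriptsize\mbox{sys}}'$ of size $|\mc{M}|+1$, so the updated $S$-rank is at least $|\mc{M}|+1$. For the upper bound, let $\mc{M}'$ be any maximal matching of $\Gamma_{\scriptsize\mbox{sys}}'$; at most one edge of $\mc{M}'$ is incident to the single new vertex $y$, and deleting that edge (if present) leaves a matching of the original graph $\Gamma_{\scriptsize\mbox{sys}}$, which therefore has size at most $|\mc{M}|$. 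Consequently $|\mc{M}'|\le|\mc{M}|+1$. Combining the two inequalities, the new maximal matching has size exactly $|\mc{M}|+1$, i.e.\ the $S$-rank increases by precisely one.

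I do not anticipate a real obstacle here; the lemma is essentially a bookkeeping statement. The two points that need care are (i) keeping the bipartite convention straight so that an \emph{unmatched} node of $\delta\mc{M}$ is a free vertex on the side ($\mc{V}^+$) opposite to the side ($\mc{V}^-$) where the new observation vertex is inserted --- this is exactly what makes the augmenting edge $(y^-,v_i^+)$ admissible --- and (ii) noting that a single new vertex can carry at most one matching edge, which is precisely why the rank cannot jump by more than one. It may also be worth remarking, using the preceding lemma on the matching-independence of contractions, that the conclusion does not depend on which maximal matching $\mc{M}$ (and hence which set $\delta\mc{M}$) is chosen.
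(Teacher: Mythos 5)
Your proof is correct. Note, however, that the paper does not actually prove this lemma at all: it is stated as a known fact with a citation to Murota, and is only illustrated by the example of Fig.~\ref{figcontraction} before being invoked in Proposition~\ref{rank-recovery}. What you have supplied is therefore a self-contained elementary replacement for that citation, and it is sound: the lower bound $|\mc{M}|+1$ follows because the new observation vertex $y$ and the unmatched column node $v_i$ are both free with respect to $\mc{M}$, so $\mc{M}\cup\{(y^-,v_i^+)\}$ is again a matching; the upper bound follows because a single new vertex can be saturated by at most one edge of any matching, so deleting that edge recovers a matching of the original graph. Combined with Lemma~\ref{linL} identifying $S$-rank with maximal matching size, this pins the increase to exactly one. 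You also correctly handle the two points that could trip this up --- the row/column (i.e.\ $\mc{V}^-$/$\mc{V}^+$) convention that makes the augmenting edge admissible, and the fact that any extra nonzeros in the new observation row cannot push the rank up by more than one. The one thing worth making explicit if this were inserted into the paper is which bipartite graph $\mc{M}$ is maximal in (that of $A$ alone versus the composite graph with the pre-existing $H$), since $\delta\mc{M}$ is defined relative to that choice; your argument is agnostic to this as long as $v_i$ is unmatched in the graph to which the observation is being added, which is exactly the hypothesis of the lemma.
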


\textit{Example:} We illustrate Lemmas~\ref{linL}--\ref{sizerecovery} in Fig.~\ref{figcontraction} where a contraction of~$3$ nodes,~$x_1,x_3,x_5$, into~$2$ nodes,~$x_2,x_4$, is shown. The number of possible maximal matchings is~${3\choose2}=3$. From Fig.~\ref{figcontraction}, a maximal matching gives one unmatched node in the contraction: e.g. in Fig.~\ref{figcontraction}~(b),~$x_1$ is the unmatched node, and after reversing the (highlighted) edges in that maximal matching, nodes~$x_3$ and~$x_5$ are reachable from~$x_1$. Simarly, Figs.~\ref{figcontraction} (c) and (d) show the remaining maximal matchings. 
\begin{figure}
\centering
\includegraphics [width=2.5in]{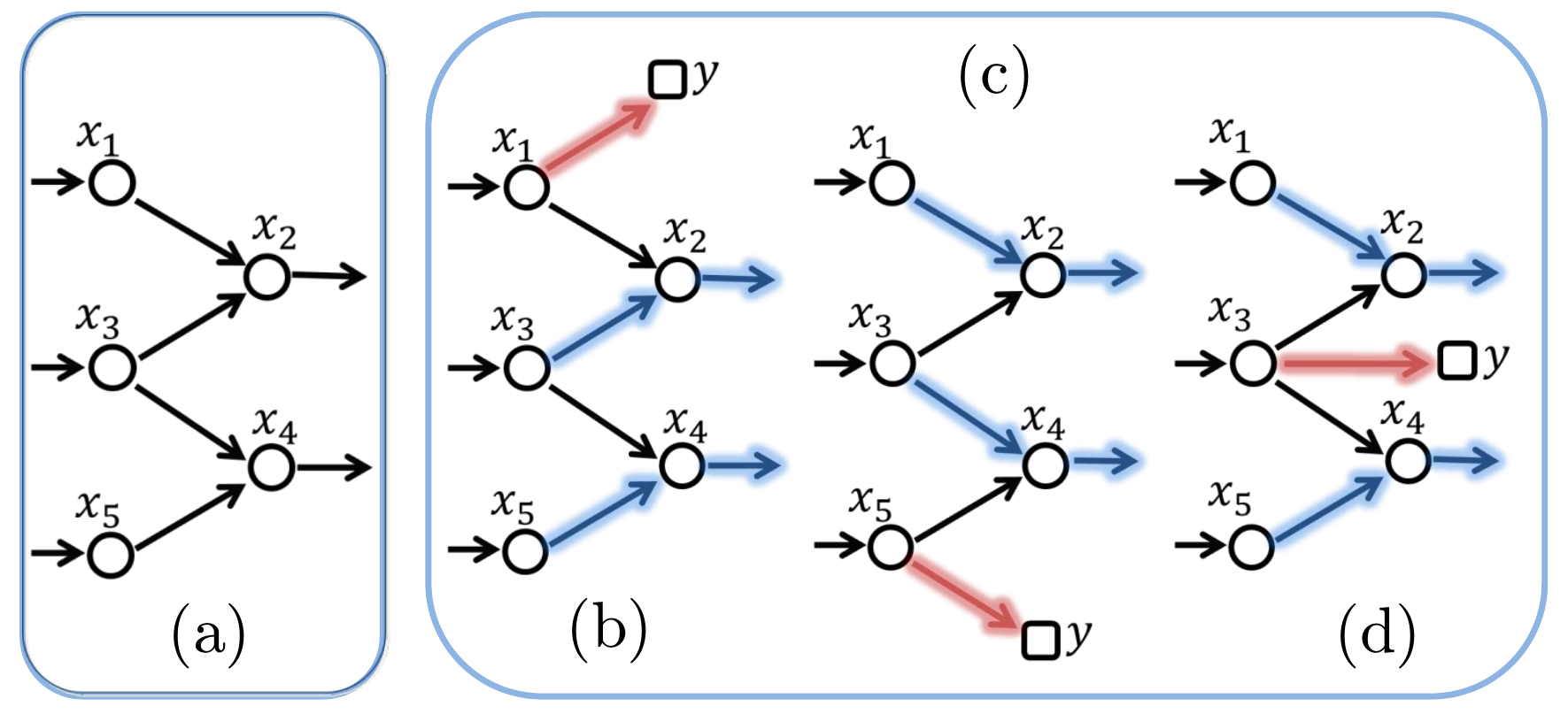}
\caption{Possible maximal matchings (shaded arrows) in a contraction where observation are added tp the unmatched node shown as red shaded arrow.}
\label{figcontraction}
\end{figure}

\begin{prop} \label{rank-recovery}
Let a matrix~$H_{\mc{C}_i}$ define an observation from an unmatched node in a contraction~$\mc{C}_i$, then
\begin{eqnarray}
S\mbox{-rank}\left[
\begin{array}{c} A \\ H_{\mc{C}_i}
\end{array} \right] = S\mbox{-rank}~(A) +1.
\end{eqnarray}
\end{prop}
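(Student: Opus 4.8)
The plan is to reduce the claim to the lemmas already stated, treating the contraction $\mc{C}_i$ in isolation. First I would invoke Lemma~\ref{linL}: the $S$-rank of $\left[A^\top~H^\top\right]^\top$ equals the size of a maximal matching $\mc{M}$ in the associated composite digraph, so it suffices to track how the maximal-matching size changes when we append the single observation row $H_{\mc{C}_i}$, which points to an unmatched node $v \in \delta\mc{M}$ of the contraction. Concretely, let $\mc{M}$ be a maximal matching of $\Gamma_A$ (equivalently, of the digraph $\mc{G}_A$ with no observations), so $|\mc{M}| = S\text{-rank}(A)$. Adding the observation creates, in the bipartite picture, a new node $y$ in $\mc{V}^-$ together with the edge $(y, v^+)$ for the unmatched node $v$; since $v^+$ was unmatched under $\mc{M}$, the set $\mc{M} \cup \{(y,v^+)\}$ is again a matching, of size $|\mc{M}|+1$. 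Hence $S\text{-rank}\left[A^\top~H_{\mc{C}_i}^\top\right]^\top \geq S\text{-rank}(A)+1$.

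Next I would argue the reverse inequality, i.e. that appending one observation row can raise the structural rank by at most one. This is immediate because $\left[A^\top~H_{\mc{C}_i}^\top\right]^\top$ has exactly one more row than $A$, and adding a single row to a matrix increases its (structural) rank by at most $1$; equivalently, in the bipartite graph we have added a single node to $\mc{V}^-$, and any matching can include at most one edge incident to that node, so the maximum matching size grows by at most one. Combining the two inequalities gives equality, which is exactly the statement of Lemma~\ref{sizerecovery} applied in this setting; Proposition~\ref{rank-recovery} is then the specialization to the case where the chosen unmatched node lies in the contraction $\mc{C}_i$. Here Lemma~\ref{berge} is what guarantees the construction is well posed: for any choice of maximal matching $\mc{M}$, the contraction $\mc{C}_i$ contains exactly one unmatched node, so ``an observation from an unmatched node in $\mc{C}_i$'' unambiguously refers to that node, and by the matching-invariance of contraction sets (the lemma preceding Lemma~\ref{berge}) the conclusion does not depend on which maximal matching was fixed.

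The main obstacle I anticipate is not the rank arithmetic, which is routine, but making precise the claim that the relevant unmatched node $v$ remains unmatched under \emph{some} maximal matching after we have committed to viewing $A$ alone (without $H$). One must be careful that the maximal matching $\mc{M}$ realizing $S\text{-rank}(A)$ can be chosen so that its restriction to the contraction leaves $v$ unmatched — this is precisely where Lemma~\ref{berge} and the matching-invariance lemma do the work, since they let us pick, among all maximal matchings, one whose unique unmatched node in $\mc{C}_i$ is the node observed by $H_{\mc{C}_i}$. Once that alignment is in place, the extension of $\mc{M}$ by the observation edge is forced to be a matching of size $|\mc{M}|+1$, and the upper bound closes the argument. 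I would therefore structure the write-up as: (1) reduce to matching sizes via Lemma~\ref{linL}; (2) use Lemma~\ref{berge} plus matching-invariance to fix a compatible maximal matching; (3) exhibit the size-$(|\mc{M}|+1)$ matching for the augmented system; (4) note the trivial upper bound; (5) conclude.
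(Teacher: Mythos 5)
Your proposal is correct and follows essentially the same route as the paper: the paper's proof likewise reduces the claim to Lemmas~\ref{berge} and~\ref{sizerecovery}, observing that for any maximal matching the contraction~$\mc{C}_i$ has exactly one unmatched node, so that an observation of it raises the $S$-rank by one. Your write-up simply unpacks Lemma~\ref{sizerecovery} into an explicit matching-augmentation argument (lower bound by adjoining the observation edge to the unmatched node, upper bound because a single added row increases the structural rank by at most one), details the paper takes for granted.
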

\begin{proof}
Directly follows from Lemma~\ref{berge} and~\ref{sizerecovery}. Since there is an unmatched node, note that $S\mbox{-rank}(A)<n$. For any choice of maximal matching~$\mc{M}$ (say highlighted edges in Fig.~\ref{figcontraction} (b)) there is only one unmatched node, ($x_1\in\mc{C}_i=\{x_1,x_3,x_5\}$), where other states in~$\mc{C}_i \backslash x_1$ are all matched. Therefore, observing~$x_1$ improves the~$S$-rank by~$1$.
\end{proof}
\begin{theorem} \label{contraction}
To recover the~$S$-rank of~$A$, one distinct state observation from every contraction set~$\mc{C}_i$ is necessary.
\end{theorem}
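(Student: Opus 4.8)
The plan is to establish the statement as a necessity condition: assuming $S\mbox{-rank}[A^\top~H^\top]^\top=n$, I will show that $H$ must observe at least one state of every contraction $\mc{C}_i$, and that these observations cannot be shared between contractions. I would first settle the counting. Applying Lemma~\ref{linL} with no observations gives $S\mbox{-rank}(A)=|\mc{M}|$ for any maximal matching $\mc{M}$ of $\mc{G}_A$, so the number of unmatched states is $|\delta\mc{M}|=n-S\mbox{-rank}(A)$. By Lemma~\ref{berge} each contraction contains exactly one unmatched node; since two distinct contractions would then have to share that unique unmatched node, the map $v_i\mapsto\mc{C}_i$ is a bijection between $\delta\mc{M}$ and $\mc{C}$, and (viewing the $\mc{C}_i$ as the connected blocks of the deficient part of the Dulmage-Mendelsohn decomposition) they are pairwise disjoint. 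Hence $|\mc{C}|=n-S\mbox{-rank}(A)$ is precisely the structural-rank deficiency of $A$.

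Next I would record that appending one scalar observation to $[A^\top~H^\top]^\top$ adds a single row-node to the bipartite graph used to compute the structural rank, so the maximum matching---hence $S\mbox{-rank}$---can increase by at most one per observation (K\"{o}nig's theorem). In the converse direction, Proposition~\ref{rank-recovery} (with the matching-invariance of $\mc{C}$) says that observing a state of $\mc{C}_i$ that is unmatched under some maximal matching raises $S\mbox{-rank}$ by exactly one; together with the previous bound this shows that the rank gap of $|\mc{C}|$ cannot be closed with fewer than $|\mc{C}|$ observations, which motivates but does not yet prove the one-per-contraction claim.

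The core of the argument is the localization step: an observation helps only inside an as-yet-uncovered contraction. Fix $\mc{C}_i$ and suppose $H$ observes no state of $\mc{C}_i$. Regard $\mc{C}_i$ as a set of columns of $[A^\top~H^\top]^\top$. Its neighbourhood consists of the in-neighbours of $\mc{C}_i$ in $\mc{G}_A$---which by the defining contraction property has size $|\mc{C}_i|-1$, since $|\mc{C}_i|$ states are contracted onto $|\mc{C}_i|-1$ nodes---together with the output rows attached to \emph{observed} states of $\mc{C}_i$, of which there are none. Thus $\mc{C}_i$ is a deficient column set, and the defect form of Hall's theorem gives $S\mbox{-rank}[A^\top~H^\top]^\top<n$, a contradiction. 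Therefore every contraction must contain a state observed by $H$, and since the $\mc{C}_i$ are pairwise disjoint these are distinct states---exactly the asserted necessary condition (and, by the counting above, no fewer than one per contraction will do).

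The step I expect to be the main obstacle is tying the combinatorial object ``contraction'' to a genuinely deficient set of columns---that $\mc{C}_i$ has in-neighbourhood of size exactly $|\mc{C}_i|-1$ and that distinct contractions are disjoint---since this rests on the fine structure of the Dulmage-Mendelsohn decomposition underlying Section~\ref{agr1} rather than on the lemmas quoted above; I would either invoke that structure directly or reconstruct it from the alternating-path definition of $\mc{C}_i$. A minor point, already handled above, is ruling out that an observation of a state lying in no contraction could help: that output row is disjoint from the deficient set $\mc{C}_i$, so it leaves the deficiency---and hence $S\mbox{-rank}$---unchanged.
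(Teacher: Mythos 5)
Your Hall-type deficiency argument correctly establishes that every contraction must contain at least one observed state, and in that respect it is more rigorous than the paper's own sketch (which simply appeals to Lemmas~\ref{linL}--\ref{sizerecovery} and Proposition~\ref{rank-recovery} and the ``equivalence'' of states within a contraction). However, there is a genuine gap: your claim that the contractions $\mc{C}_i$ are pairwise disjoint is false, and it carries the entire ``distinct'' part of the conclusion. A contraction is the set of $\mc{V}^+$-states reachable by alternating paths from one unmatched node; each contraction contains exactly one unmatched node (Lemma~\ref{berge}), but two different unmatched nodes can reach the same \emph{matched} state, so distinct contractions may overlap. The paper's own running example in Section~\ref{example} has $\mc{C}=\{\{3,1\},\{4,5,6,1\}\}$ with state $x_1$ shared, and the paper's proof sketch is explicitly built around this case (``if two contraction sets share a state, observation from that common state only recovers the $S$-rank of one of them''). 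Your identification of the $\mc{C}_i$ with disjoint blocks of the Dulmage--Mendelsohn decomposition does not match the paper's definition: the deficient (horizontal) part of the DM decomposition is not split per unmatched node. With overlap allowed, your localization step only shows that each $\mc{C}_i$ meets the observed set, which a single shared observation can achieve without closing the rank gap; so the theorem's actual claim is not yet proved.

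The fix is to run your deficiency argument on \emph{unions} of contractions rather than single ones. For any subfamily $S$ of contractions with union $U=\bigcup_{i\in S}\mc{C}_i$, the alternating-path closure shows that the bipartite neighbourhood of $U$ inside the rows of $A$ has size $|U|-|S|$: every reachable $\mc{V}^-$ node is matched (else there would be an augmenting path) and its mate lies in $U$, while $U$ contains exactly $|S|$ unmatched nodes. Hence $H$ must attach at least $|S|$ observation rows to states of $U$, or else $U$ remains a deficient column set and $S\mbox{-rank}[A^\top~H^\top]^\top<n$. This condition, holding for every subfamily $S$, is precisely Hall's condition for a system of distinct representatives, so any observation set achieving full $S$-rank must contain one distinct observed state per contraction --- which is the statement of the theorem.
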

The proof follows from the previous arguments. For~$S$-rank recovery, the observations from any state in a contraction,~$\mc{C}_i$, are \textit{equivalent}. In other words, for any state in the contraction, there exists a maximal matching where this state is unmatched. If two contraction sets, say~$\mc{C}_i$ and~$\mc{C}_j$, share a state, observation from that common state only recovers the~$S$-rank of one of them. Hence, a \textit{distinct} observation from the other contraction set is required. The observations are called \textit{independent} if they are distinct and are taken from distinct contractions.

\begin{defn} \label{alpha}
We name the observations in the contractions as \textit{Type-$\alpha$}, and the agents measuring them as~$\alpha$-agents.
\end{defn}

Notice that the necessary observations for~$S$-rank recovery are different from Type-$\beta$ observations recovering accessibility. Type-$\alpha$ and Type-$\beta$ observations are both equally critical for centralized observability; however, they play different roles in distributed case as will be discussed in Section~\ref{partial}. To the best of our knowledge, this is not considered in the literature and makes this work of more interest as compared to centralized observability recovering in~\cite{Liu-nature,commault-recovery ,boukhobza-recovery}.

\subsection{Necessary conditions for centralized observability} \label{nesscon}
We now merge the necessary conditions from the previous in the following theorem as the main result of this section.
\begin{theorem} \label{centralized}
For a social digrpah,~$\mc{G}_A$, the observations necessary for generic observability are:
\begin{enumerate} [(i)]
\item one from every contraction set,~$\mc{C}_i$;
\item one from every matched parent SCC,~$\mc{S}^{\circlearrowleft,p} _{i}$.
\end{enumerate}
\end{theorem}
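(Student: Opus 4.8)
The plan is to obtain Theorem~\ref{centralized} simply by conjoining the two independent necessary conditions already established: the accessibility requirement from Theorem~\ref{parent} and the $S$-rank requirement from Theorem~\ref{contraction}. By Theorem~\ref{th1}, generic (centralized) observability of the pair $(A,H)$ holds if and only if both the accessibility condition and the $S$-rank condition hold in the composite digraph $\mc{G}_{\scriptsize \mbox{sys}}$. Since both conditions must hold simultaneously, any set of necessary observations for either one is necessary for observability as a whole. Thus the argument is a direct "AND" of the two prior results, and no new machinery is needed.

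First I would recall, via Theorem~\ref{parent}, that at least one observation from every matched parent SCC, $\mc{S}^{\circlearrowleft,p}_i$, is necessary: without such an observation, the states in that parent SCC have no $\mc{Y}$-connected path (a parent SCC has no outgoing edges to other states), violating accessibility. Second, I would invoke Theorem~\ref{contraction}: one distinct state observation from every contraction set $\mc{C}_i$ is necessary, since adding an observation of an unmatched node in $\mc{C}_i$ is the only way (per Proposition~\ref{rank-recovery} and Lemmas~\ref{berge}--\ref{sizerecovery}) to raise the $S$-rank of $\left[A^\top~H^\top\right]^\top$ to $n$. Combining, both (i) and (ii) are necessary, which is exactly the statement of Theorem~\ref{centralized}.

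The one subtlety worth flagging — and the only place the proof is more than a trivial restatement — is the independence/disjointness of the two families of required observations. One must check that demanding an observation from each $\mc{C}_i$ and an observation from each $\mc{S}^{\circlearrowleft,p}_i$ does not secretly over- or under-count, i.e., that these are genuinely two separate necessary conditions rather than one subsuming the other. This is handled by Remark~\ref{childSCC}: matched parent SCCs live over the cyclic (matched) part of $\mc{G}_A$ and contain no unmatched node, whereas each contraction is defined over the unmatched part and contains precisely one unmatched node (Lemma~\ref{berge}). Hence the two structures are disjoint in the relevant sense, the accessibility-critical states and the $S$-rank-critical states are of genuinely different type (Type-$\beta$ versus Type-$\alpha$ by Definitions~\ref{beta} and~\ref{alpha}), and neither condition implies the other.

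The main obstacle, then, is not a difficult estimate but making the logical structure airtight: one should state explicitly that "necessary for accessibility" and "necessary for $S$-rank" each imply "necessary for observability" by Theorem~\ref{th1}, and that the two lists do not collapse into one because of the disjointness just described. Once that is said, the proof is complete; I would write it in two or three sentences citing Theorems~\ref{parent} and~\ref{contraction}, Theorem~\ref{th1}, and Remark~\ref{childSCC}.
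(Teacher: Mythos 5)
Your proof takes essentially the same route as the paper, which simply states that Theorem~\ref{centralized} follows from Theorems~\ref{parent} and~\ref{contraction}; conjoining the two separately-established necessary conditions (each necessary for observability via Theorem~\ref{th1}) is exactly the intended argument. One small caution: your claim that the two structures are ``disjoint'' overstates Remark~\ref{childSCC} --- contractions and matched parent SCCs \emph{can} share states (the paper's own example has $\mc{C} \cap \mc{S}^{\circlearrowleft,p}=\{5,6\}$); what matters for necessity is only that neither condition implies the other, and the overlap only affects the later \emph{minimal} counting, not this theorem.
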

\begin{proof}
The proof follows from Theorems~\ref{parent} and~\ref{contraction}.
\end{proof}

Theorem~\ref{centralized} is not only true for the centralized case where all observations are collected at a central coordinator as in~\cite{commault-recovery, boukhobza-recovery}, but is also applicable to the distributed case. If each agent has access to these necessary observations it can infer the global information of all social states. However, in the distributed case, agents have partial measurements and thus, the idea is to recover these necessary conditions by communicating and sharing information among the agents.

\section{Recovering local observability at each agent} \label{partial}
In this section, we extend the results to distributed estimator, where the observations are available over a network. Through the links in the network, agents can share necessary information on their observations and/or predictions, and recover their partial observability. Assuming no information loss over the communication links, the chief objective is to design the necessary topology of the agent network to ensure distributed (generic) observability. Recall that by distributed observability we imply the observability of the pair~$(W \otimes A , D_H)$.

\begin{theorem}\label{mainthm}
Consider the composite digraph,~$\mc{G}_{\mbox{\scriptsize sys}}$, to have the necessary observations,~$H$, from Theorem~\ref{centralized}. The digraph possesses generic  distributed observability if and only if every agent,~$i$, has the following characteristics:
\begin{enumerate}[(i)]
\item For every contraction,~$\mc{C}_l$, agent~$i$ receives a direct link from an~$\alpha$-agent measuring a state in~$\mc{C}_l$;
\item Either one of the following for every~$\mc{S}^{\circlearrowleft , p}_l$:
\begin{enumerate}[(a)]
\item Agent~$i$ receives a direct link from a~$\beta$-agent measuring a state in~$\mc{S}^{\circlearrowleft , p}_l$;
\item Agent~$i$ sends its information via a sequence of agents making a directed path to a~$\beta$-agent, say $j$, with a state observation in~$\mc{S}^{\circlearrowleft,p}_l$.
\end{enumerate}
\end{enumerate}
\end{theorem}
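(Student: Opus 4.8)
The plan is to prove both directions by translating the distributed observability of $(W \otimes A, D_H)$ into a structural condition on the composite digraph $\mc{G}_{\scriptsize\mbox{sys}}$ associated to the Kronecker-product system, and then applying Theorem~\ref{th1} (accessibility and $S$-rank) to that enlarged digraph. The system digraph of $W \otimes A$ consists of $N$ copies of $\mc{G}_A$ (one per agent) with additional inter-copy edges wherever $w_{ij} \neq 0$, i.e. wherever the $\beta$-network $\mc{G}_\beta$ has an edge; the observation structure $D_H$ attaches to copy $i$ exactly the observations that agent $i$ collects, which after the $\alpha$-fusion in Eq.~\eqref{le} are the observations of $\{i\}\cup\mc{N}_\alpha(i)$. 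So the composite digraph for the distributed system is: $N$ disjoint copies of $\mc{G}_A$, plus $\beta$-edges linking the copies, plus to copy $i$ the observation nodes from every agent in $\mc{N}_\alpha(i)\cup\{i\}$. Distributed observability is then equivalent, by Theorem~\ref{th1}, to this digraph satisfying both accessibility and the $S$-rank condition.

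I would split the argument along the two conditions of Theorem~\ref{th1}. For the \emph{$S$-rank condition}: each copy of $\mc{G}_A$ has $S$-rank deficiency equal to the number of contractions (Lemma~\ref{linL}, Theorem~\ref{contraction}), and because the $N$ copies are otherwise state-disjoint, the $\beta$-edges between copies cannot help the matching — a disjoint union of cycles/$\mc{Y}$-paths must be found within each copy separately. Hence for copy $i$ the $S$-rank is recovered if and only if, for every contraction $\mc{C}_l$, some observation attached to copy $i$ hits an unmatched node of $\mc{C}_l$; by Proposition~\ref{rank-recovery} and the equivalence-class discussion after Theorem~\ref{contraction}, this is exactly condition~(i) — agent $i$ must receive (via an $\alpha$-link) an observation of a state in every contraction $\mc{C}_l$. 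For the \emph{accessibility condition}: every state in copy $i$ must have a $\mc{Y}$-connected path in the composite digraph. Within a single copy, the only states without a downstream observation are those in matched parent SCCs $\mc{S}^{\circlearrowleft,p}_l$ (Theorem~\ref{parent} and its corollary); all child SCCs already route to some parent SCC. So accessibility of copy $i$ reduces to: for each $\mc{S}^{\circlearrowleft,p}_l$, the (unique copy-$i$) nodes of that SCC reach $\mc{Y}$ — either directly because agent $i$ itself sees a state of $\mc{S}^{\circlearrowleft,p}_l$ (condition~(ii)(a), an $\alpha$-link suffices), or because a $\beta$-edge path from copy $i$ leads to some copy $j$ whose agent $j$ is a $\beta$-agent for $\mc{S}^{\circlearrowleft,p}_l$ — since $w$ being nonzero along $i \to \cdots \to j$ in $\mc{G}_\beta$ creates state-to-state edges from the $\mc{S}^{\circlearrowleft,p}_l$-copy in $i$ to the $\mc{S}^{\circlearrowleft,p}_l$-copy in $j$, which then reaches $\mc{Y}$ through $j$'s observation. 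That is precisely condition~(ii)(b).

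Each direction is then routine: \emph{necessity} — if condition~(i) fails for some $i,l$, the $S$-rank of the $i$-th block is deficient, so the Grammian drops rank and $(W\otimes A,D_H)$ is not observable; if condition~(ii) fails for some $i,l$, the states of the $\mc{S}^{\circlearrowleft,p}_l$-copy in block $i$ have no $\mc{Y}$-connected path, violating accessibility. \emph{Sufficiency} — assuming $H$ already contains the Theorem~\ref{centralized} observations and that (i),(ii) hold for every agent, one assembles for each copy $i$ a disjoint union of cycles and $\mc{Y}$-paths (cycles inside matched SCCs, $\mc{Y}$-paths through the $\alpha$-observations into the contractions, exactly as in the centralized proof) and verifies accessibility copy-by-copy using the $\beta$-paths for the parent SCCs; since the copies are edge-disjoint on states, the global structure is just the union and Theorem~\ref{th1} applies.

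The main obstacle I anticipate is making the $S$-rank half fully rigorous: one must argue that a maximal matching of the whole Kronecker composite digraph decomposes as a union of maximal matchings of the individual copies — i.e. that the cross-copy $\beta$-edges are useless for matching purposes because every $\beta$-edge $(x^{(i)}_r, x^{(j)}_r)$ lands on a node $x^{(j)}_r$ that is the head of a within-copy edge as well, and a careful exchange/augmenting-path argument (in the spirit of the lemmas of Section~\ref{agr1}) is needed to show no global matching can exceed $N\cdot S\mbox{-rank}(A)$ plus the number of distinct observations hitting distinct unmatched nodes per copy. Handling the case where a single $\alpha$-agent's observation is shared by several copies (so the observation node is *not* duplicated and could, a priori, be matched in only one copy) is the delicate point; the resolution is that $D_H$ does duplicate the fused observation $\wt H_i$ per block, so each copy gets its own observation node, and the book-keeping goes through.
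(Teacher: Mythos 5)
Your proposal is correct in outline and follows essentially the same route as the paper: necessity by applying the centralized conditions of Theorem~\ref{centralized} block-by-block to the composite digraph of $(W\otimes A, D_H)$ (condition (i) for the $S$-rank of each agent's block, condition (ii) for accessibility, with $\beta$-paths propagating accessibility across copies), and sufficiency by assembling the spanning cycles and $\mc{Y}$-connected paths per copy --- indeed the paper's own proof is terser than yours, deferring sufficiency entirely to~\cite{jstsp}. The one refinement worth noting: the cross-copy edges of $W\otimes A$ run $x_s^{(j)}\to x_r^{(i)}$ for every $a_{rs}\neq 0$ with $w_{ij}\neq 0$ (not between corresponding states $x_r^{(i)},x_r^{(j)}$), and the ``main obstacle'' you flag on the $S$-rank side is dispatched algebraically rather than combinatorially, since $\mathrm{rank}(W\otimes A)=\mathrm{rank}(W)\,\mathrm{rank}(A)$ for any realization, so the generic rank deficiency is exactly $N\lvert\mc{C}\rvert$ and no augmenting-path argument across copies is needed.
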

\begin{proof}
Sufficiency is proved in~\cite{jstsp}. Necessity follows a similar argument: The proof of parts (i) and (ii)-(a) comes directly from Theorem~\ref{centralized}; in part (i), receiving a state observation from every contraction~$\mc{C}_l$ recovers the~$S$-rank condition at each agent, while, in part (ii)-(a), receiving a state observation from every~$\mc{S}^{\circlearrowleft,p}_l$ directly recovers the accessibility at agent~$i$. Part (ii)-(b) recovers the accessibility in~$W \otimes A$ indirectly. A directed path from agent~$i$ to~$\beta$-agent~$j$ makes the inaccessible SCC~$\mc{S}^{\circlearrowleft ,p}_l$ at agent~$i$, accessible via agent~$j$.
\end{proof}

Condition (i) defines an~$\alpha$-network,~$\mc{G}_\alpha$, where agents share measurements directly with each other. Whereas, condition (ii)-b defines a~$\beta$-network,~$\mc{G}_\beta$, over which the agents {\bf\textit {only}} share their predictions. Notice that, this connectivity requirement is weaker than the necessary condition in~\cite{nuno-suff.ness}, where each agent requires to transmit/share {\bf \textit{both}} its observations and predictions to every other agent over the same network. It is noteworthy that (ii)-(a) is a straightforward technique to recover accessibility~\cite{sauter:09}. But it may require long-distance links as compared to (ii)-(b). In particular, when the system is full~$S$-rank there are no Type-$\alpha$ agents and any strongly-connected network is sufficient (not necessary) to satisfy (ii)-(b). This assumption is prevalent to guarantee stability of distributed estimators, e.g. in~\cite{usman_cdc:11, sayedtu12, battistelli_cdc}, however, as we have shown, it is only applicable to full~$S$-rank systems.

{\bf Minimal observability}: We now consider the \textit{minimal} conditions for distributed observability, i.e. minimal number of agents and their connectivity. When the matched parent SCCs and contractions share some states, observation of those common states recovers both~$S$-rank and accessibility. Thus, it reduces the minimal number of necessary agents. Here, we first define the necessary number of different type of agents. From Proposition~\ref{rank-recovery}, each Type-$\alpha$ observation recovers one~$S$-rank deficiency of the social digraph,~$A$. Similarly, since all parent SCCs are disjoint, every Type-$\beta$ observation recovers the accessibility of one matched parent SCC.
\begin{rem}
Let~$n_\alpha$ and~$n_\beta$ denote the number of necessary Type-$\alpha$ and Type-$\beta$ observations, respectively. Then,
\begin{equation}
n_\alpha = |\mc{C}| = n - S\mbox{-rank} (W\otimes A), \qquad
n_\beta = \left|\mc{S}^{\circlearrowleft,p}\right|.
\end{equation}
\end{rem}
Note that $S\mbox{-rank}(W\otimes A)=S\mbox{-rank}(A)$, because $W$ is full $S\mbox{-rank}$~\cite{asilomar11,icassp13}. Further, the number,~$n_\beta$, of necessary Type-$\beta$ observations is to be corrected as the sets~$\mc{S}^{\circlearrowleft,p}$ and~$\mc{C}$ may share some states--e.g. see Section~\ref{example}. Adjusting for the possible shared states, we have the following lemma.
\begin{theorem}
The minimum number of necessary observations for distributed observability is~$|\mc{C}| +|\mc{S}^{\circlearrowleft,p}|-|\mc{S}^{\circlearrowleft,p} \cap \mc{C}|.$
\end{theorem}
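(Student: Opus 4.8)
The plan is to establish the minimum count by combining the two necessary-condition theorems (Theorem~\ref{parent} and Theorem~\ref{contraction}) with an inclusion-exclusion argument, and then to argue that this count is actually achievable. First I would recall from Theorem~\ref{centralized} that any observation set guaranteeing generic (centralized, hence distributed) observability must contain at least one observation from every contraction~$\mc{C}_i$ and at least one from every matched parent SCC~$\mc{S}^{\circlearrowleft,p}_i$. These are the two disjoint-within-themselves families: the contractions are pairwise distinct as index sets (by the $S$-rank recovery argument following Theorem~\ref{contraction}, a shared state recovers only one of two contractions, so distinct observations are forced), and the matched parent SCCs are pairwise disjoint as vertex sets (Remark~\ref{childSCC}). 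So a lower bound on the number of observations is the size of a minimum ``hitting set'' that meets all~$|\mc{C}|$ contractions \emph{and} all~$|\mc{S}^{\circlearrowleft,p}|$ matched parent SCCs.

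The key structural fact I would invoke is that a matched parent SCC and a contraction can overlap in a state, but a single state can lie in at most one matched parent SCC (disjointness of parent SCCs) and, when counting toward the minimum, can be ``charged'' to at most one contraction it is the unmatched node of. By Remark~\ref{childSCC}, contractions live on the unmatched part of~$\mc{G}_A$ and matched parent SCCs live on the cyclic/matched part and contain no unmatched node; however, a state in~$\mc{C}_i$ (other than the unmatched node) may well belong to a matched parent SCC, so the overlap~$\mc{S}^{\circlearrowleft,p}\cap\mc{C}$ is generally nonempty. For each state in this intersection, placing one observation there simultaneously satisfies one contraction requirement and one matched-parent-SCC requirement, saving exactly one agent. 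Since no state can simultaneously reduce two contraction deficiencies or two SCC deficiencies, the savings are additive over the~$|\mc{S}^{\circlearrowleft,p}\cap\mc{C}|$ shared states and no more. This gives the lower bound $|\mc{C}|+|\mc{S}^{\circlearrowleft,p}|-|\mc{S}^{\circlearrowleft,p}\cap\mc{C}|$.

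For the matching upper bound (achievability), I would construct an explicit observation set of exactly this size: for every state in~$\mc{S}^{\circlearrowleft,p}\cap\mc{C}$ that is (or can be chosen as, by the equivalence-of-matchings lemmas) the unmatched node of its contraction, place one observation there, killing one contraction and one matched parent SCC at once; for the remaining contractions and remaining matched parent SCCs, place one observation each in an unobserved state. By Theorem~\ref{centralized} this set recovers both accessibility and the $S$-rank, hence centralized observability, and by Theorem~\ref{mainthm} the accompanying $\alpha$- and $\beta$-networks then deliver distributed observability. Counting the placed observations yields exactly $|\mc{C}|+|\mc{S}^{\circlearrowleft,p}|-|\mc{S}^{\circlearrowleft,p}\cap\mc{C}|$, matching the lower bound.

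The main obstacle I anticipate is the achievability direction, specifically verifying that one can always choose the maximal matching so that each shared state in~$\mc{S}^{\circlearrowleft,p}\cap\mc{C}$ is simultaneously the unmatched representative of its contraction; this needs the ``any maximal matching yields the same contraction set'' and ``exactly one unmatched node per contraction'' lemmas (Lemmas~\ref{berge} and the one preceding it), together with an alternating-path exchange argument to move the unmatched node onto the desired shared state. A secondary subtlety is confirming that a shared state, once observed, genuinely discharges both the $S$-rank obligation for its contraction (Proposition~\ref{rank-recovery}) and the accessibility obligation for its parent SCC (Theorem~\ref{parent}) without any interference between the two recoveries — but since the $S$-rank and accessibility conditions of Theorem~\ref{th1} are logically independent, this should go through cleanly.
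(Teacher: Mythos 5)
Your route is essentially the paper's own (the paper only sketches it, appealing to ``the previous discussion''): force $|\mc{C}|$ observations for the contractions and $|\mc{S}^{\circlearrowleft,p}|$ for the matched parent SCCs, then subtract the double counting when one observed state serves both purposes. One of the two obstacles you flag is not actually an obstacle: you do not need an alternating-path exchange to make a shared state the unmatched representative of its contraction, because the discussion following Theorem~\ref{contraction} already states that every state of a contraction is unmatched under \emph{some} maximal matching, so by Proposition~\ref{rank-recovery} observing \emph{any} state of $\mc{C}_i$ recovers that contraction's unit of $S$-rank.

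The genuine gap is the step ``the savings are additive over the $|\mc{S}^{\circlearrowleft,p}\cap\mc{C}|$ shared states.'' A contraction demands only \emph{one} observation, so if two shared states lie in the \emph{same} contraction but in two distinct matched parent SCCs, observing both discharges two accessibility requirements but only one $S$-rank requirement: the net saving is one, not two. The paper's own worked example in Section~\ref{example} is exactly this configuration: $\mc{C}\cap\mc{S}^{\circlearrowleft,p}=\{5,6\}$ with both states inside the single contraction $\{4,5,6,1\}$, and the example correctly concludes that the minimum is $3$ observations ($\min n_\beta = 2-1=1$), whereas additivity over the two shared states would give $2+2-2=2$. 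Concretely, your lower bound $|\mc{C}|+|\mc{S}^{\circlearrowleft,p}|-|\mc{S}^{\circlearrowleft,p}\cap\mc{C}|$ remains valid (subtracting too much only weakens it), but your achievability construction cannot always place observations realizing $|\mc{S}^{\circlearrowleft,p}\cap\mc{C}|$ distinct savings, so the two bounds do not meet. The correct correction term is the maximum number of pairwise-disjoint (contraction, matched parent SCC) pairs that share a state, i.e., the size of a maximum matching in the bipartite ``sharing'' graph between $\mc{C}$ and $\mc{S}^{\circlearrowleft,p}$; to close the argument you must either prove the theorem under that reading of $|\mc{S}^{\circlearrowleft,p}\cap\mc{C}|$ (which is how the example implicitly uses it), or add the hypothesis that distinct shared states lie in distinct contractions and distinct parent SCCs, under which your inclusion--exclusion count goes through as written.
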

The proof is straightforward and follows the previous discussion. Notice that~$\alpha$-agents necessitate \textit{strict} connectivity in the network as compared to~$\beta$-agents. Recovering the~$S$-rank is \textit{only} possible via a direct observation; however, the accessibility can be obtained via a direct observation \textit{or} a directed path to a~$\beta$-agent, where the latter case gives minimal connectivity requirements for~$\beta$-agents.

\begin{rem}
Condition (ii)-(b) in Theorem~\ref{mainthm} gives the \textit{minimal} connectivity requirement for~$\beta$-agents. The~$\alpha$-agents are more critical than the~$\beta$-agents because of their stringent connectivity. This implies that observation of any state in~$\mc{S}^{\circlearrowleft,p} \cap \mc{C} \neq \emptyset$ are of Type-$\alpha$. In other words,~$n_\beta = |\mc{S}^{\circlearrowleft,p}| - |\mc{C} \cap \mc{S}^{\circlearrowleft,p}|$. Further note that the observability of $(W \otimes A,D_H)$ is tied to the structure, $A$, of the social digraph,~$\mc{G}_A$,~\cite{icassp13}. Particularly, the S-rank of~$A$ defines the number of Type-$\alpha$ observations as discussed above. This, in turn, affects the agent network,~$\mc{G}_W$, i.e. the structure of the matrix~$W$, as provided in Theorem~\ref{mainthm}.
\end{rem}

\section{Illustrations} \label{example}
We provide a simple example to explain the concepts and results established in this paper. Consider an~$n=6$-state system, whose digraph,~$\mc{G}_A$, associated bipartite graph,~$\Gamma_A$, and auxiliary graph,~$\Gamma^\mc{M}_A$, are shown in Fig.~\ref{figgraph}.
\begin{figure}
\centering
{\includegraphics[width=3.2in]{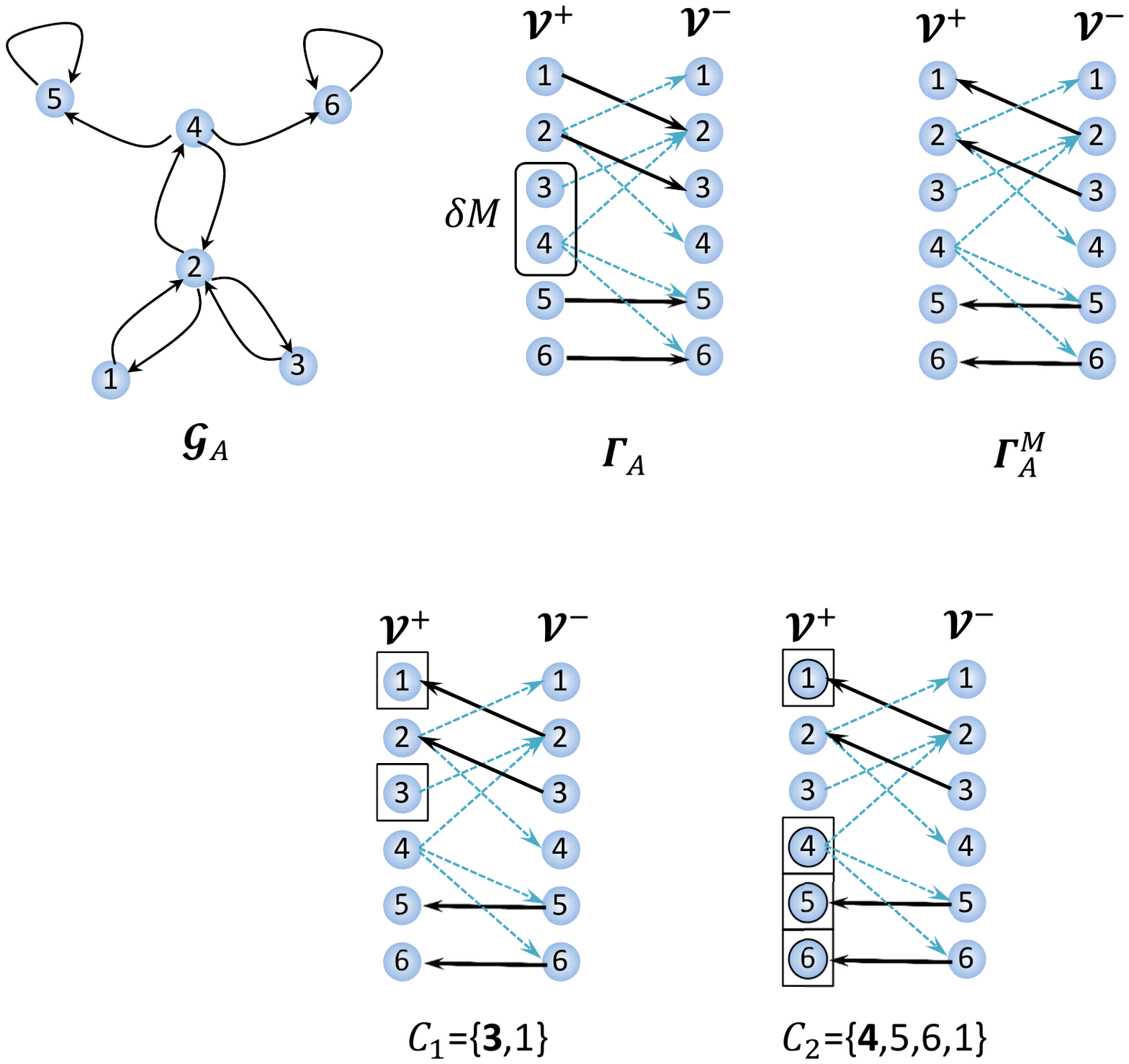}}
\caption{Social digraph,~$\mc{G}_A$, Bipartite graph,~$\Gamma_A$, and Auxiliary graph,~$\Gamma_A^{\mc{M}}$}
\label{figgraph}
\end{figure}
The maximal matching,~$\mc{M}$, is shown as black edges in~$\Gamma_A$ and~$\Gamma^\mc{M}_A$. The contractions and unmatched nodes are illustrated in Fig.~\ref{figG} (left).
\begin{figure}
\centering
{\includegraphics[width=3.5in]{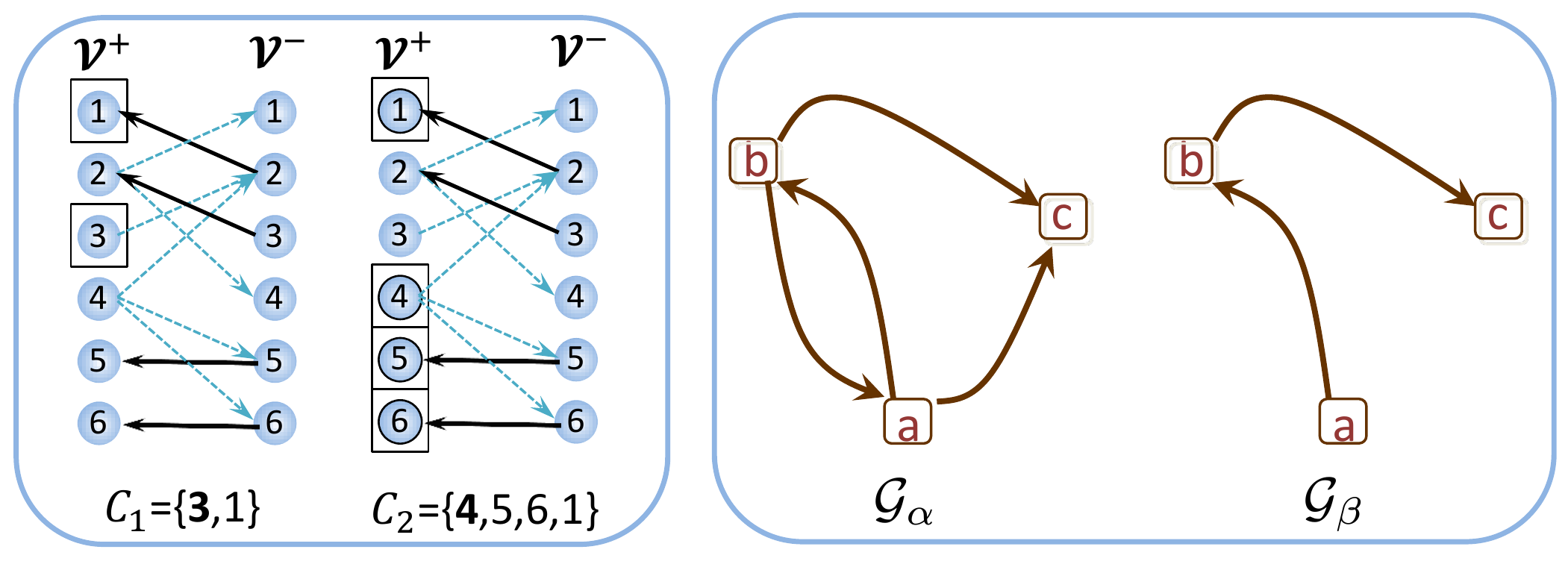}}
\caption{(Left) Contraction sets of the system in Fig.~\ref{figgraph}. (Right) Necessary communication network:~$\mc{G}_W=\mc{G}_\alpha \cup \mc{G}_\beta$.}
\label{figG}
\end{figure}
The unmatched nodes are~$\delta \mc{M} = \{3 , 4 \}$. The contractions are determined via the alternating paths in auxiliary graph as~$\mc{C} = \{\{3,1\},\{4,5,6,1\}\}$. Both parent SCCs are  matched (cyclic), thus~$\mc{S}^{\circlearrowleft,p}=\{\{5\},\{6\}\}$.

It can be verified that the~$S$-rank of the associated system matrix is~$4$. From Theorem~\ref{centralized}, we require~$n_\alpha=6-4=2$ observations from~$\mc{C}$, and~$n_\beta=2$ observations from~$\mc{S}^{\circlearrowleft,p}$. Note that~$\mc{C} \cap \mc{S}^{\circlearrowleft,p} = \{5,6\}$, therefore, having a Type-$\alpha$ observation of~$x_5$ or~$x_6$ recovers accessibility of~$\mc{S}^{\circlearrowleft,p}_1 = \{5\}$ or~$\mc{S}^{\circlearrowleft,p}_2 = \{6\}$, respectively. We get~$\min\{n_\beta\} = 2-1= 1$. We choose the following observations: Type-$\alpha$ agents,~$a,b$, observing~$x_3, x_5$; Type-$\beta$ agent~$c$ observing~$x_6$. Next, we define the necessary network connectivity from Theorem~\ref{mainthm}: (i) agents~$a$ and~$b$ send their observations directly to each other, and agent~$c$, over~$\mc{G}_\alpha$; and, (ii) agent~$c$ receives information from agents~$a$ and~$b$ over~$\mc{G}_\beta$.

\subsection{Large-scale Social Networks}\label{lssn}
We now provide some insights of our results towards inference in social networks. Consider a social group of actors with states, e.g. opinions, sentiments, emotions, etc., that evolve over social interactions. The influence network, e.g. friendship, co-authorship, swarming, etc., is time-invariant but the influence weight of actors may vary over time, and different weight assignment model the evolution of different states resulting into different social phenomena. Our aim is to infer such phenomena by observing some necessary states without considering any particular dynamics but only the social interactions (digraph). For distributed inference, first, we classify these states (and the agents observing them) according to Definitions~\ref{beta} and \ref{alpha}. The necessary network of agents is defined according to Theorem~\ref{mainthm}. 

Following the discussion in Section~\ref{mdl}, the structure of any social digraph is highly relevant to the dynamics that may take place over the social network. In this context, we use some of the well-known social network models~\cite{DuBois:2008,monksdata} and explore the graphical observability results developed in this paper. These networks have been used for the estimation of corresponding social phenomena modeled as LSI systems as discussed in Section~\ref{mdl}. Each node (circles in Figs.~\ref{sampson}--\ref{sn_fig4}) represents a state, e.g., heading, opinion, buying habits, etc., in the social digraph and evolves over social interactions. Theorem~\ref{centralized} characterizes the necessary observerations. These observations (and their associated agents) are classified as Type-$\alpha$ (red circles) and Type-$\beta$ (green circles). Finally, Theorem~\ref{mainthm} characterizes the network of these agents accordingly. The results are summarized in the table below:
\begin{center}
\begin{tabular}
{ | c | c | c | c | c |} \hline
 & $n=|\mc{V}_A|$ & $E=|\mc{E}_A|$  & $n_\alpha$ \tikz\draw[red,fill=red] (0,0) circle (.5ex); & $n_\beta$ \tikz\draw[green,fill=green] (0,0) circle (.5ex);\\ \hline
Monks & $18$ & $88$  & $0$ & $1$ \\ \hline
Blogs & $1224$ & $19025$  & $436$ & $0$ \\ \hline
Books & $105$ & $882$  & $0$ & $1$ \\ \hline
Coauthorship & $1461$ & $5484$  & $37$ & $248$ \\ \hline
\hline
\end{tabular}
\end{center}

(a) \emph{Sampson's Monastery Network}, mentioned earlier in Section~\ref{mdl}, is a directed network of interactions among the Monks in a monastery. The digraph from~\cite{monksdata} is shown in Fig.~\ref{sampson}. The network is full $S$-rank, implying $n_\alpha=0$, and is strongly-connected so $n_\beta=1$. To illustrate agent connectivity, assume a collection of such monasteries, each with one necessary Type-$\beta$ measurement monitored by one $\beta$-agent. From our results, it is necessary for the agents to communicate over a strongly-connected network in order to estimate any social phenomena on the union of the corresponding social digraphs.
\begin{figure}[!h]
\centering
\includegraphics[width=2in]{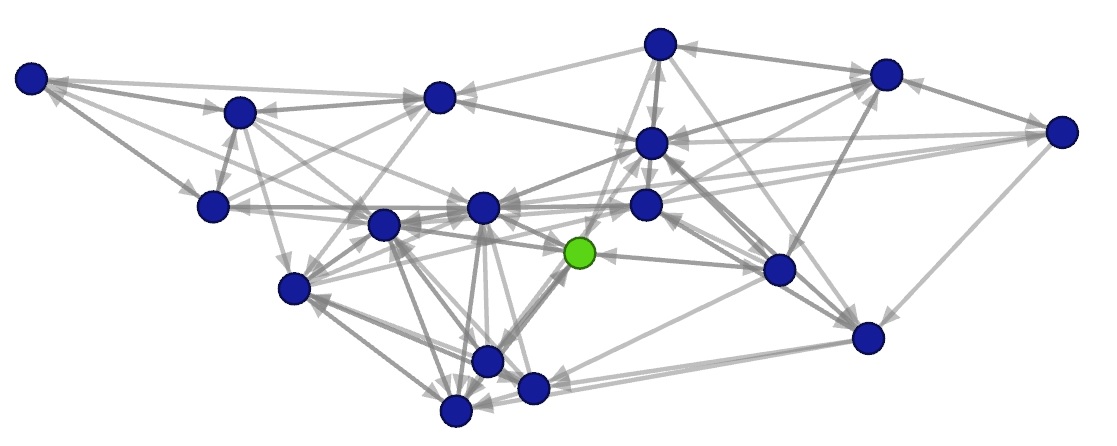}
\caption{Directed Sampson's network with $18$ actors.}
\label{sampson}
\end{figure}

(b) \emph{Political Blogs}: A social digraph of hyperlinks between weblogs on US politics~\cite{uci_polblogs}, shown in Fig.~\ref{sn_fig23} (left). Each node represents a blog linked to other political blogs; the state at each node could be the popularity of the blog evolving via political commentary~\cite{Diggblog}. The blogs can be seen to have two dominant clusters constituting blogs that are more followed and hyperlinked. The digraph has $n_\alpha=436$ unmatched nodes. We may observe that most of the Type-$\alpha$ agents appear on the boundary of the network where the blogs are less cited (hyperlinked), and thus, may not be inferred from the interior nodes. This specific example of inference of the popularity of such blogging network shows that: (i) hubs (nodes with high degrees) are not necessary for observability; and, (ii) to extract the popularity of all blogs in a distributed way, a fully-connected network is necessary (and sufficient \cite{jstsp}).

(c) \emph{Books on US Politics}: Amazon.com data--undirected edges represent co-purchasing of books by the same buyers~\cite{polbooks}, digraph is shown in Fig.~\ref{sn_fig23} (right). The network has full $S$-rank, thus $n_\alpha=0$, and is further connected so $n_\beta=1$, and can be an observation from any node.
\begin{figure}[!h]
\centering
\subfigure{\includegraphics[width=1.5in]{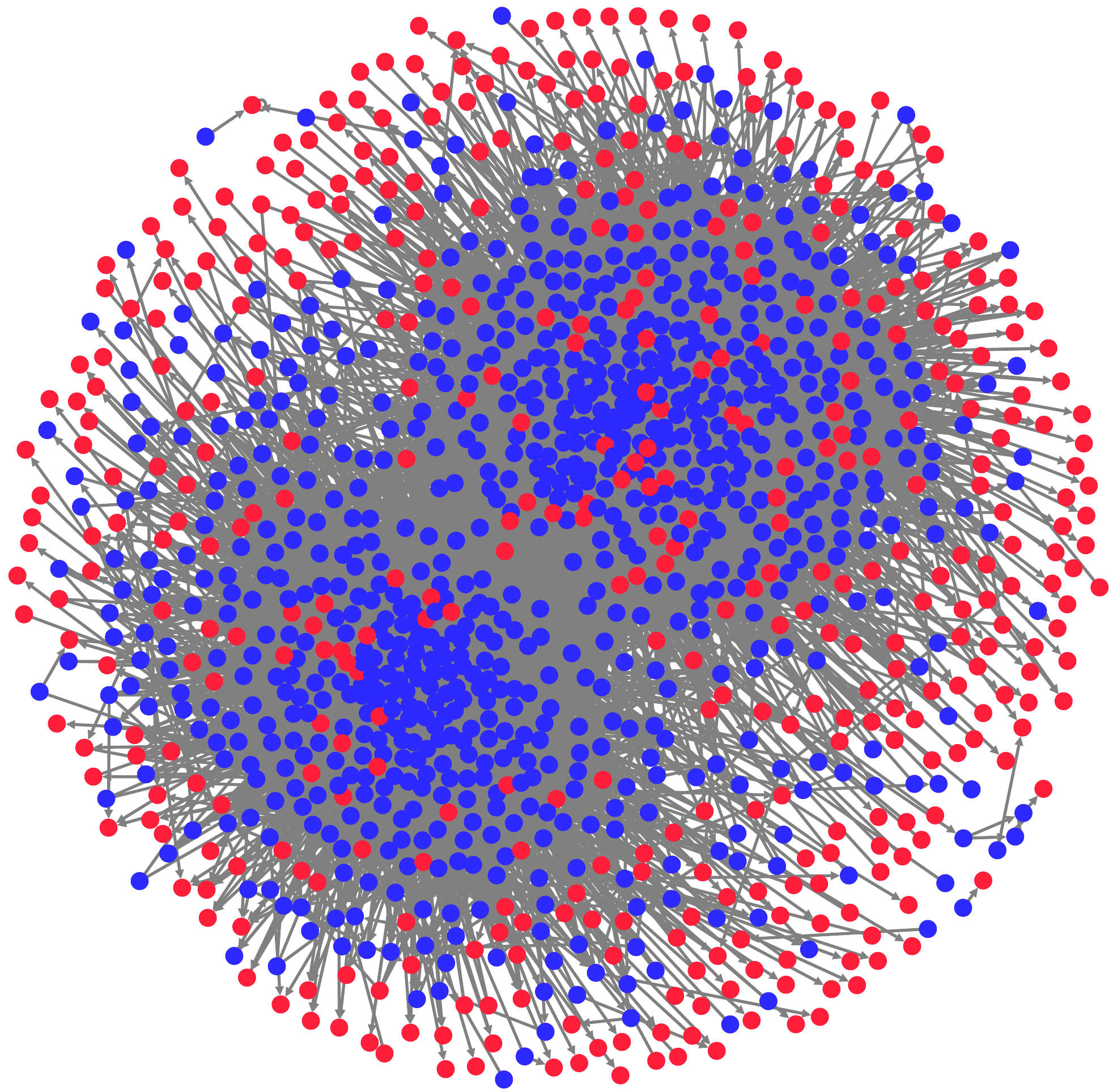}}
\hspace{1cm}
\subfigure{\includegraphics[width=1.5in]{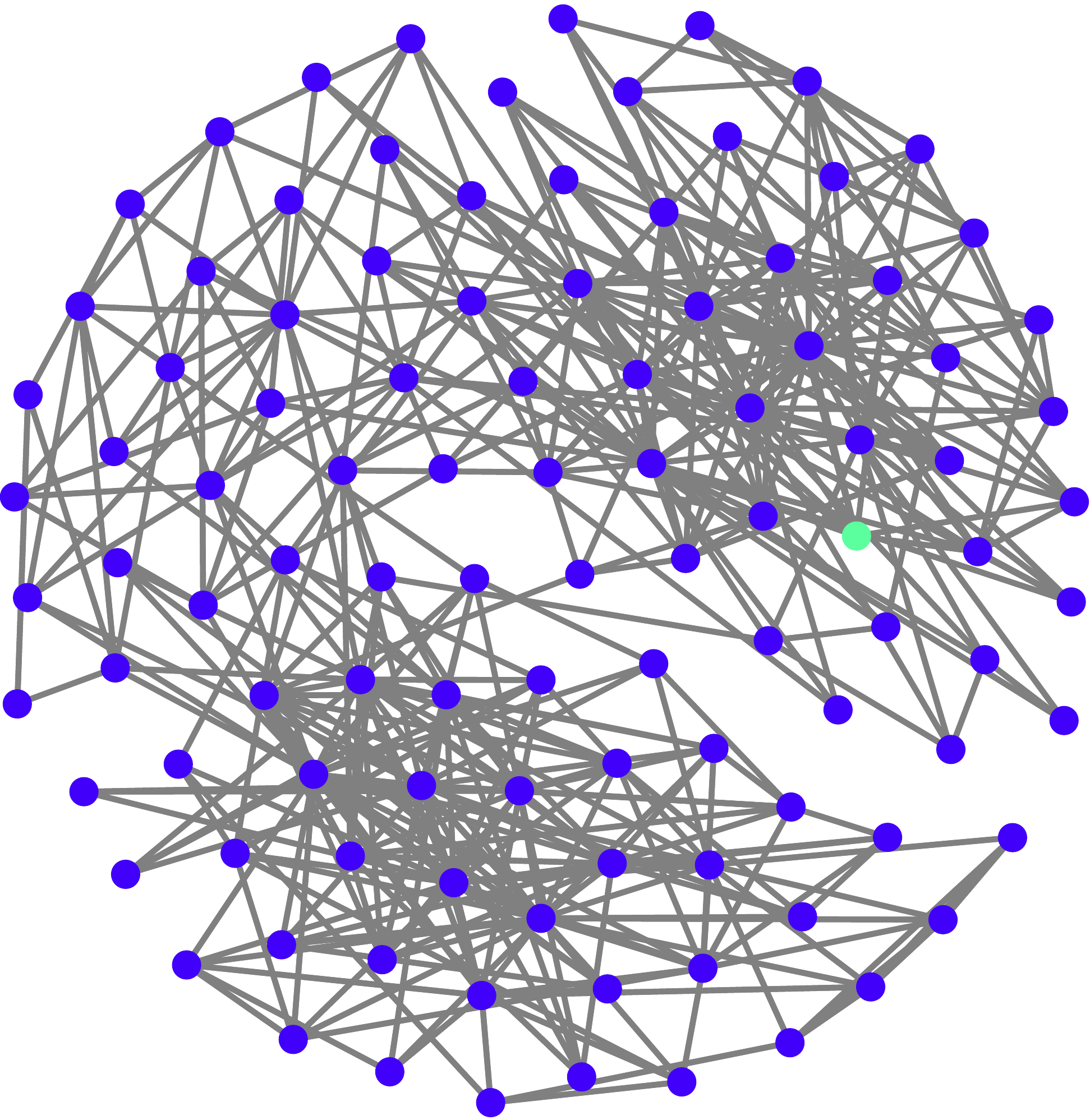}}
\caption{Social digraphs: (Left) Political blogs during the $2004$ US Elections with $1224$ nodes; (Right) Network of political books with $105$ nodes.}
\label{sn_fig23}
\end{figure}

(d) \emph{Co-authorship in Network Science}: A digraph of researchers in network theory~\cite{uci_netsci}, shown in Fig.~\ref{sn_fig4}. The states  may model a novel concept or a result and the links represent the influence among the authors. The digraph contains $268$ components out of which $248$ are matched. All of matched components are parent resulting into $n_\beta=248$; and, $n_\alpha=37$. Wiring according to Theorem~\ref{mainthm}, each agent may infer any phenomena that evolves over this social digraph.
\begin{figure}[!h]
\centering
\includegraphics[width=3in]{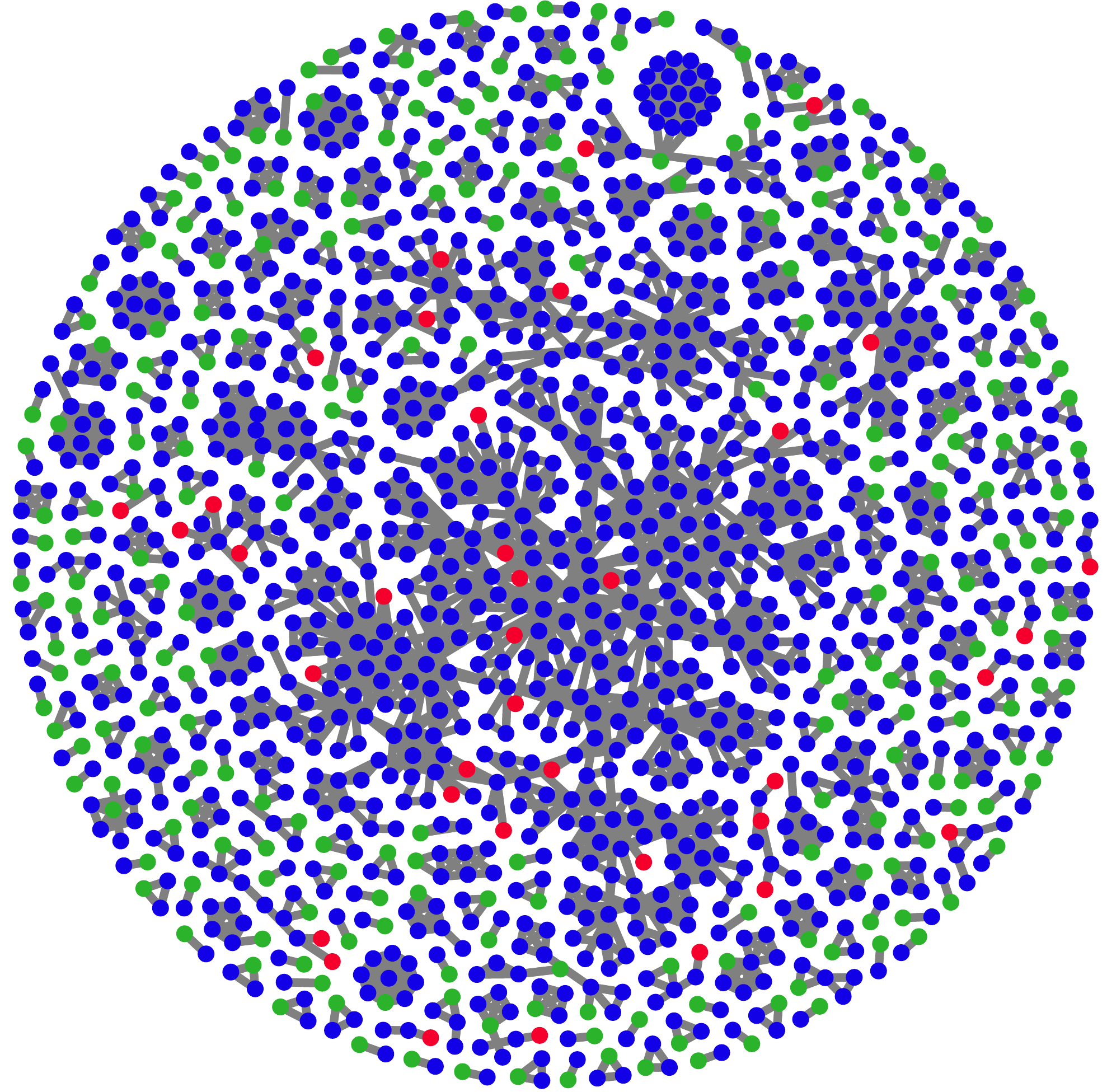}
\caption{Co-authorship network with $1461$ nodes}
\label{sn_fig4}
\end{figure}

\section{Conclusions}\label{con}
This paper formally develops the necessary conditions for distributed observability of social networks modeled as LSI systems. We characterize necessary observations, agent classification, and network connectivity that enable each agent to infer any social phenomena evolving over a given social digraph. In particular, we show that the distributed observability requires no more observations than the centralized case; however, it necessitates certain classification and connectivity requirements on the agents observing those states. We partition the necessary agents to Type-$\alpha$ with strict connectivity requirements, and Type-$\beta$ with milder connectivity. We provide combinatorial algorithms to define such partitioning and show the relevance and applicability to real world examples of large-scale social systems. These results can be applied in distributed estimation of smart grids and other physical systems.

\bibliographystyle{IEEEbib}
\bibliography{bibliography}
\end{document}